\documentclass[aps,prl,reprint,amsmath,amssymb,superscriptaddress,
  longbibliography,nofootinbib]{revtex4-2}
\usepackage[T1]{fontenc}
\usepackage[utf8]{inputenc}
\usepackage{lmodern}
\usepackage{amsthm}
\usepackage{mathtools}
\usepackage{bm}
\usepackage{microtype}
\usepackage{graphicx}
\usepackage{tikz}
\usetikzlibrary{arrows.meta,positioning,calc}
\usepackage[colorlinks=true,linkcolor=blue,citecolor=blue,urlcolor=blue,
  pdfusetitle]{hyperref}

\newcommand{\R}{\mathbb R}
\newcommand{\E}{\mathbb E}
\newcommand{\Prob}{\mathbb P}
\newcommand{\id}{\mathbb I}
\newcommand{\ind}{\mathbf 1}
\newcommand{\sphere}{S^2}
\newcommand{\ket}[1]{\lvert #1\rangle}
\newcommand{\bra}[1]{\langle #1\rvert}
\newcommand{\norm}[1]{\lVert #1\rVert}
\newcommand{\abs}[1]{\lvert #1\rvert}
\newcommand{\pos}[1]{\bigl[#1\bigr]_+}
\newcommand{\dd}{\mathrm d}
\newcommand{\tv}{\mathrm{TV}}
\DeclareMathOperator{\diag}{diag}
\DeclareMathOperator{\sgn}{sgn}
\DeclareMathOperator{\clip}{clip}
\DeclareMathOperator{\Tr}{Tr}
\DeclareMathOperator{\Unif}{Unif}
\newcommand{\mainheading}[1]{%
  \par\medskip\noindent
  \textbf{#1.---}\enspace\ignorespaces
}
\newcommand{\protocolstep}[2]{%
  \par\smallskip\noindent
  \textit{Step #1: #2.}\enspace\ignorespaces
}

\newtheoremstyle{manuscript}{5pt}{5pt}{\itshape}{}{\bfseries}{.}{.5em}{}
\theoremstyle{manuscript}
\newtheorem{theorem}{Theorem}
\newtheorem{corollary}{Corollary}
\newtheorem{smtheorem}{Theorem}

\newtheorem{smlemma}{Lemma}

\newtheorem{smprop}{Proposition}

\begin{document}
\nocite{Bell1964,PR1994,TonerBacon2003,Cerf2005,Brunner2005,Brunner2008,Ahanj2011,Gisin2009,EPR1992,Renner2023,Portmann2012,Degorre2005,Chaturvedi2017,vanDam2013,Kaplan2011,Oszmaniec2017,Devroye1986,Strassen1965,Artstein1983}

\title{Finite Nonsignalling Resources Suffice to Simulate Entangled Qubits}

\author{Ziao Tang}
\author{Xin Wang}
\email{felixxinwang@hkust-gz.edu.cn}
\author{Ge Bai}
\email{gebai@hkust-gz.edu.cn}
\affiliation{The Hong Kong University of Science and Technology (Guangzhou),
Guangdong 511453, China}

\begin{abstract}
A longstanding question in Bell nonlocality is whether finite-alphabet nonsignalling resources can reproduce the continuum of measurement correlations of partially entangled qubits. Here we resolve this problem by showing that every fixed pure two-qubit state admits an exact simulation of all local projective-measurement statistics using shared randomness and finitely many Popescu-Rohrlich (PR) boxes, without communication or postselection. The boxes operate in parallel, with a worst-case bound independent of the measurement settings: at most $128\cot^4\theta$ calls for the Schmidt state $\cos\theta|00\rangle+\sin\theta|11\rangle$, $0<\theta<\pi/4$. The key is an exact finite-selection method that reduces input-dependent continuous sampling to a choice from a fixed-length shared random list, implemented through a finite PR-box network. The construction extends to arbitrary mixed two-qubit states and local binary-outcome POVMs. This reveals a fundamental separation between the continuity of quantum measurement correlations and the alphabet size of the resources needed to simulate them, shedding light on the resource-theoretic structure of Bell nonlocality.
\end{abstract}

\maketitle

\mainheading{\textit{Introduction}}
Bell's theorem shows that shared randomness alone cannot account for all quantum correlations~\cite{Bell1964}. It leaves a complementary question: how simple can the nonlocal resources be that reproduce these correlations without communication? Popescu--Rohrlich (PR) boxes provide a natural starting point, exhibiting correlations beyond the quantum set while obeying the nonsignalling principle, with only binary inputs and outputs~\cite{PR1994}. For qubits, the question is particularly intriguing: local projective measurements have only two outcomes, yet their measurement directions vary continuously.

For maximally entangled qubits, a single PR box and shared randomness suffice to simulate all local projective-measurement statistics~\cite{Cerf2005}, just as one bit of classical communication suffices in a different simulation model~\cite{TonerBacon2003}. Surprisingly, some partially entangled states cannot be simulated with one PR box~\cite{Brunner2005}, revealing that weaker entanglement need not entail simpler nonlocal resources. General exact nonsignalling protocols for partially entangled qubits have instead relied on an additional millionaire box (M-box), which compares real-valued inputs~\cite{Brunner2008,Ahanj2011}. Is this continuous input alphabet fundamentally necessary, or merely a feature of existing constructions? This finite-alphabet question, identified in Ref.~\cite{Brunner2008} and explicitly posed by Gisin~\cite{Gisin2009}, has remained open. A strong resolution requires exact simulation across all measurement directions with a finite worst-case resource budget, rather than discretizing the settings or bounding only the expected number of calls. Continuous shared randomness remains allowed, while the finiteness requirement concerns the nonsignalling resources.

Here we show that a continuous input alphabet is unnecessary. Every fixed pure two-qubit state admits an exact simulation of all local projective-measurement statistics using shared randomness and finitely many parallel PR boxes, without communication or postselection. For the Schmidt state $\cos\theta\ket{00}+\sin\theta\ket{11}$, $0<\theta<\pi/4$, the worst-case number of PR-box calls is at most $128\cot^4\theta$, independently of the measurement settings. The construction further extends to every fixed mixed two-qubit state and arbitrary local binary-outcome positive-operator-valued measurements (POVMs).

The key is to separate local sample selection from nonsignalling coordination. Starting from a known local--nonlocal decomposition~\cite{EPR1992,Renner2023} attaining the maximal local weight~\cite{Portmann2012}, we represent its nonlocal remainder through local responses to a random direction drawn from a distribution depending on Alice's measurement setting, connecting the problem to distributed sampling~\cite{Degorre2005}. Alice then selects from a fixed-length list of shared random directions prepared before the measurement settings arrive; averaging over the list and her local choice yields the required distribution exactly. Parallel PR boxes then coordinate the outputs to reproduce the corresponding correlations without revealing Alice's selected index to Bob (Fig.~\ref{fig:protocol}). Our result thus establishes that continuously varying qubit measurement settings do not require continuous-alphabet nonlocal resources for exact simulation, while providing a finite-selection approach applicable to other correlation families admitting suitable distributed-sampling representations.

\mainheading{\textit{Simulation task}}
Up to local changes of basis, any partially entangled pure
two-qubit state can be written as
\begin{equation}
 \ket{\psi_\theta}
 =\cos\theta\ket{00}+\sin\theta\ket{11},
 \qquad 0<\theta<\frac{\pi}{4}.
 \label{eq:state}
\end{equation}
Alice and Bob know the state and independently receive
measurement directions $x,y\in\sphere$, where $\sphere$
is the unit Bloch sphere. They must output
$A,B\in\{-1,+1\}$ according to the joint statistics of
the corresponding local projective measurements,
$\Pi_{\alpha|x}=(\id_2+\alpha x\cdot\sigma)/2$ and
$\Pi_{\beta|y}=(\id_2+\beta y\cdot\sigma)/2$,
where $\sigma$ denotes the Pauli vector.
Writing $c=\cos(2\theta)$, $s=\sin(2\theta)$, and
$T=\diag(s,-s,1)$,
the target quantum correlations are specified by
\begin{equation}
 P_\theta(\alpha,\beta\mid x,y)
 =\frac14\left[
 1+\alpha cx_z+\beta cy_z
 +\alpha\beta x^{\mathsf T}Ty
 \right],
 \label{eq:born}
\end{equation}
where $\alpha,\beta\in\{-1,+1\}$ and $x_z,y_z$ are
the third components of the measurement directions.
Thus, exact simulation requires reproducing both local
averages, $\E(A\mid x)=cx_z$ and $\E(B\mid y)=cy_z$,
and the correlator $\E(AB\mid x,y)=x^{\mathsf T}Ty$,
for every pair of measurement settings.

Alice and Bob may share randomness drawn independently of their
measurement inputs, with a distribution that may depend on the known state.
In a local model, each party generates an output using only this
shared randomness, their own measurement setting, and local randomness.
We additionally allow nonsignalling boxes, whose local
output marginals are independent of the distant input.
In particular, a PR box is defined by
\begin{equation}
 R_{\mathrm{PR}}(u,v\mid a,b)
 =\frac12\ind\{u\oplus v=ab\},
 \qquad a,b,u,v\in\{0,1\}.
 \label{eq:pr}
\end{equation}

The simulation must be exact for every $x,y$, with
no communication or postselection.
Every run must produce an output on both sides.
Crucially, the nonsignalling resources must have finite
input and output alphabets, and their number of calls
must be bounded in the worst case independently of
the measurement settings. Continuous shared randomness
and exact Borel-measurable local processing remain
allowed and no bound on their bit complexity is imposed.

\begin{theorem}[Finite nonsignalling simulation]
\label{thm:main}
For every fixed state $\ket{\psi_\theta}$ in
Eq.~\eqref{eq:state}, shared randomness and at most
\begin{equation}
 K(\theta)
 =4\left\lceil\sqrt{2}\pi\cot^2\theta\right\rceil^2
 \le128\cot^4\theta
 \label{eq:coarsebudget}
\end{equation}
PR-box calls suffice to reproduce all local
projective-measurement statistics exactly.
All calls can be made in parallel, without
communication or postselection.
\end{theorem}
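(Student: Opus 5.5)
The plan has three steps: write the correlations as a mixture of a local part and a nonlocal remainder, turn the remainder into a finite selection by Alice from shared random directions, and use PR boxes to make Bob's output depend on her selected index without revealing it.

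\emph{Step 1: local--nonlocal decomposition.} Start from the known decomposition~\cite{EPR1992,Renner2023} with maximal local weight~\cite{Portmann2012}:
\[
P_\theta = p_L P_L + (1-p_L) P_{NL}, \qquad 1-p_L = s .
\]
The local part $P_L$ is realised with shared randomness alone. The remainder $P_{NL}$ has unbiased marginals and a correlator that is bilinear in suitably transformed directions. We must rewrite this correlator as
\[
\E(AB\mid x,y) = \E_{\lambda\sim\mu_x}\bigl[a_x(\lambda)\, b_y(\lambda)\bigr],
\]
where $\lambda\in\sphere$ is a random direction. Its law $\mu_x$ depends only on Alice's setting and has density $f_x$ with respect to the uniform measure, bounded by some $M(\theta)$. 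Bob's response $b_y(\lambda)=\sgn(y'\cdot\lambda)$ is deterministic. Alice's response $a_x(\lambda)\in\{\pm1\}$ may be randomised. This is the distributed-sampling form of~\cite{Degorre2005}. I expect the weight $\cot^2\theta$ to enter here, through the ratio of $c$ to $s$, the nonlocal weight. The quantity $\sqrt2\pi\cot^2\theta$ should then emerge as a density bound, or as the number $N$ of list entries needed.

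\emph{Step 2: exact finite selection.} Share $N$ i.i.d.\ uniform directions $\lambda_1,\dots,\lambda_N$. Alice chooses an index $J$ with probability proportional to $f_x(\lambda_j)$, suitably corrected, so that the selected $\lambda_J$ has law exactly $\mu_x$. The naive self-normalised choice is not exact. Instead I would give each index weight $f_x(\lambda_j)/(NM)$ and put the leftover mass $1-\sum_j f_x(\lambda_j)/(NM)$ on a ``fallback''. The fallback must still be handled without extra resources, for instance by arranging $\mu_x$ as a mixture whose complement is itself realisable locally.

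The cleaner fix is to choose the correction $w_x(\lambda_j\mid\lambda_{-j})$ so that averaging over the list gives exactly $f_x$, which is an exact importance-resampling identity. I expect this step to be the main obstacle: we need an exact, always-terminating selection rule with fixed $N$, valid for every $x$. My first attempt would be to use an exchangeable weighting of the form
\[
w_j = \frac{f_x(\lambda_j)}{f_x(\lambda_j) + \sum_{i\neq j} g(\lambda_i)},
\]
with $g$ chosen independently of $x$. I would then verify the marginal identity by symmetry, with $N\approx \lceil\sqrt2\pi\cot^2\theta\rceil^2$ making the weights sum to at most one.

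\emph{Step 3: PR-box coordination.} With $J$ chosen, Alice needs $A$ such that
\[
\E(AB) = \E\bigl[a_x(\lambda_J)\, b_y(\lambda_J)\bigr],
\]
while Bob cannot learn $J$. Bob computes the bits $b_y(\lambda_j)$ for all $j$. The task then reduces to the inner-product-type function $A\oplus B = \langle e_J, \mathbf b\rangle$, masked by Alice's bit $a_x(\lambda_J)$. By the standard decomposition of any bipartite function with binary output, a sum of AND terms can be implemented with one PR box per AND, all in parallel, with XOR-ed outputs uniformly random so no marginal leaks. A one-hot encoding of $J$ would cost $N$ boxes. The factor $4$ and the square in $K(\theta)$ suggest indexing the list by a two-dimensional grid, $J=(j_1,j_2)$, together with a constant overhead for randomising marginals and combining with the local part.

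Finally I would check that the marginals are unbiased, so the local part fixes $\E(A\mid x)=cx_z$ and $\E(B\mid y)=cy_z$, that the total number of calls is $K(\theta)$, and the elementary bound $K(\theta)\le128\cot^4\theta$ using $\lceil z\rceil\le 2z$ for $z\ge1$.
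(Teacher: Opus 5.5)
Your three-step architecture (local--nonlocal split, finite selection, index box) matches the paper's. The gap is in Step~2, and it cannot be closed by choosing better correction weights: from a list of i.i.d.\ uniform directions, \emph{no} selection rule is exact for this target. The density $f_x(\lambda)=\frac{2}{1-c}\pos{\abs{(Tx)\cdot\lambda}-c\abs{\lambda_z}}$ vanishes on open sets. For instance, for $x=(1,0,0)$ it is zero wherever $s\abs{\lambda_1}\le c\abs{\lambda_z}$ (with $\lambda_1$ the first component). This is a set $D_0$ containing neighbourhoods of $\pm e_z$, of mass $p>0$. With probability $p^N>0$ every list entry lies in $D_0$, so whatever index Alice picks, $\Prob(\lambda_J\in D_0)\ge p^N>0=\nu_x(D_0)$. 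This rules out your exchangeable weighting for every $g$ and every $N$. Also, a rule that always outputs needs weights summing to exactly one, not at most one. Your first scheme, with weights $f_x(\lambda_j)/(NM)$, selects from the list only with average probability $1/M$, leaving $(1-1/M)\nu_x$ to the fallback. Realising that fallback by redrawing destroys the worst-case bound, and you give no other mechanism. Nothing in your scheme ties $N$ to $\lceil\sqrt2\pi\cot^2\theta\rceil^2$ either.

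The missing idea is to \emph{stratify} the shared list. Fix a partition $C_1,\ldots,C_m$ with representatives $v_j$ such that $\abs{f_x(\lambda)-f_x(v_j)}\le\frac12$ on $C_j$ for all $x$. Share one candidate uniform in each $C_j$, plus one global uniform candidate $\omega_N$, with $N=m+1$. Split $f_x=\underline f_x+h_x$, where $\underline f_x=\sum_j\gamma_j(x)\ind\{\lambda\in C_j\}$ and $\gamma_j(x)=\pos{f_x(v_j)-\frac12}$, so that $0\le h_x\le1$. Alice accepts $\omega_N$ with probability $h_x(\omega_N)$. Otherwise she picks cell $j$ with probability $w_j(x)/W_x$, where $w_j=\mu(C_j)\gamma_j(x)$, \emph{without inspecting} $\omega_j$, which therefore stays uniform on $C_j$. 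Exactness follows from $\int(1-h_x)\,\dd\mu=W_x$, and the variation bound gives $W_x>0$.

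The budget then comes from a covering argument, not a density bound. The density $f_x$ is $2\cot^2\theta$-Lipschitz in $\lambda$, uniformly in $x$; this is where $\cot^2\theta=(1+c)/(1-c)$ enters, not through a ratio $c/s$. Hence antipodally paired angular cells of side $\pi/(2n)$, with $n=\lceil\sqrt2\pi\cot^2\theta\rceil$, suffice and give $m=4n^2$. The square reflects the two angular coordinates of the grid, not a two-dimensional encoding of $J$; re-indexing cannot push the full index box below $N-1$ PR boxes. To reach exactly $K=N-1=m$, use the wiring $a_j=\ind\{I=j\}$, $b_j=z_j\oplus z_N$ for $j<N$, rather than a one-hot encoding with $N$ boxes.

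Two smaller slips. The nonlocal weight is $1-c=2\sin^2\theta$, not $s=\sin2\theta$. And $\lceil z\rceil\le2z$ only yields $K\le32\pi^2\cot^4\theta\approx316\cot^4\theta$; you need $\lceil z\rceil\le z+1$ together with $\cot^2\theta\ge1$ to get $K\le128\cot^4\theta$.
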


The bound depends only on the state and scales as
$O(\theta^{-4})$ as $\theta\to0^+$.
The endpoints are covered separately:
the product state at $\theta=0$ requires no PR box,
whereas the maximally entangled state at
$\theta=\pi/4$ admits a one-PR-box simulation
\cite{Cerf2005}. Accordingly, we set
$K(0)=0$ and $K(\pi/4)=1$.

\mainheading{\textit{Local--nonlocal decomposition}}
To isolate the part of the quantum correlations that
requires nonlocal resources, we employ a known
local--nonlocal decomposition of $P_\theta$
\cite{EPR1992,Portmann2012,Renner2023}.
For pure two-qubit states, the maximal local weight is
$c=\cos(2\theta)$~\cite{Portmann2012}. An explicit
construction attaining this weight
\cite{Renner2023} gives
\begin{equation}
 P_\theta=cP_L+(1-c)P_N,
 \label{eq:decomposition}
\end{equation}
where $P_L$ is local and $P_N$ is nonsignalling.
A shared random coin selects the local branch with
probability $c$ and the nonlocal branch with probability
$1-c$, independently of the measurement settings.
Only the latter requires nonsignalling resources.

The local branch uses only shared randomness, following
Ref.~\cite{Renner2023} (reviewed in SM Sec.~\ref{sm:quantum}).

The remaining task is to simulate $P_N$. In the
decomposition above, $P_L$ has local means $x_z,y_z$,
whereas $P_N$ has uniform marginals and is determined
entirely by its correlator $C_N(x,y)$:
\begin{equation}
 P_N(\alpha,\beta\mid x,y)
 =\frac14\left[1+\alpha\beta C_N(x,y)\right].
 \label{eq:nonlocalbehavior}
\end{equation}
Crucially, this correlator admits a distributed-sampling
representation~\cite{Renner2023,Degorre2005}.
Let $\mu=\dd S/(4\pi)$ denote the uniform probability
measure on $\sphere$. For each measurement direction
$x$, define a probability distribution
$\nu_x(\dd\lambda)=f_x(\lambda)\mu(\dd\lambda)$,
with density
\begin{equation}
 f_x(\lambda)=\frac{2}{1-c}
 \pos{\abs{(Tx)\cdot\lambda}-c\abs{\lambda_z}},
 \label{eq:qubitdensity}
\end{equation}
where $[t]_+=\max(t,0)$. Define the local sign
responses
\begin{equation}
 a_x(\lambda)=\sgn((Tx)\cdot\lambda),\qquad
 b_y(\lambda)=\sgn(y\cdot\lambda),
 \label{eq:signresponses}
\end{equation}
with $\sgn(0)=+1$. The nonlocal correlator then reads
\begin{equation}
 C_N(x,y)=
 \int_{\sphere}a_x(\lambda)b_y(\lambda)
 \,\nu_x(\dd\lambda).
 \label{eq:correlationrepresentation}
\end{equation}
For fixed $\theta$, the density $f_x$ is normalized,
is even in $\lambda$, and is jointly continuous in
$(x,\lambda)$. These properties and the decomposition
are verified in SM Sec.~\ref{sm:quantum}.

Equation~\eqref{eq:correlationrepresentation} reveals
the central challenge: the sampled direction must
follow a distribution depending on Alice's input $x$,
yet the resulting correlations must be reproduced
without communicating that direction. In Protocol~6 of
Ref.~\cite{Renner2023}, Alice transmits the sampled
direction to Bob on the nonlocal branch. We overcome
this obstacle by replacing the transmitted direction
with an exact selection from a fixed-length list of
shared random candidates, whose selected response
is implemented using finitely many parallel PR boxes.
The resulting protocol reproduces the nonlocal
correlations without communicating the selected
direction or its index.

\begin{figure*}[t]
\centering
\begin{tikzpicture}[x=1cm,y=1cm,
  every node/.style={font=\footnotesize,align=center,inner sep=2pt},
  arr/.style={-{Latex[length=1.6mm,width=1.15mm]},line width=.6pt,
              shorten <=1pt,shorten >=1pt},
  card/.style={draw=black!35,rounded corners=2pt,line width=.45pt,fill=white},
  heading/.style={font=\small\bfseries,anchor=west}]
 \definecolor{regional}{RGB}{36,111,160}
 \definecolor{residual}{RGB}{202,100,36}
 \path[use as bounding box] (0,.05) rectangle (17.6,5.93);
 \draw[black!15] (4.4,.25) -- (4.4,5.35);
 \draw[black!15] (10.65,.25) -- (10.65,5.35);

 \node[heading] at (.05,5.68) {(a) Shared candidates};
 \definecolor{sphereLight}{RGB}{245,250,253}
 \definecolor{sphereMid}{RGB}{191,216,232}
 \definecolor{sphereEdge}{RGB}{111,154,184}
 \pgfdeclareradialshading{softSphere}{\pgfpoint{-12bp}{14bp}}{
   color(0bp)=(white); color(13bp)=(sphereLight);
   color(30bp)=(sphereMid); color(50bp)=(sphereEdge)}
 \begin{scope}[shift={(2.15,3.92)}]
  \shade[shading=softSphere] (0,0) circle (1.43);
  \begin{scope}
   \clip (0,0) circle (1.43);
   \fill[regional,opacity=.10] (0.22831,-0.71385) -- (0.26518,-0.72296) -- (0.30184,-0.73150) -- (0.33826,-0.73947) -- (0.37441,-0.74685) -- (0.41027,-0.75364) -- (0.44581,-0.75985) -- (0.48100,-0.76545) -- (0.51581,-0.77045) -- (0.55021,-0.77485) -- (0.58418,-0.77864) -- (0.61769,-0.78181) -- (0.65071,-0.78437) -- (0.68322,-0.78631) -- (0.71520,-0.78763) -- (0.74661,-0.78833) -- (0.77743,-0.78842) -- (0.80764,-0.78788) -- (0.83722,-0.78672) -- (0.86614,-0.78495) -- (0.89438,-0.78255) -- (0.92191,-0.77954) -- (0.94872,-0.77592) -- (0.97478,-0.77169) -- (1.00008,-0.76685) -- (1.02458,-0.76140) -- (1.04829,-0.75536) -- (1.07116,-0.74872) -- (1.09320,-0.74150) -- (1.09320,-0.74150) -- (1.10564,-0.71113) -- (1.11738,-0.68030) -- (1.12842,-0.64898) -- (1.13876,-0.61720) -- (1.14839,-0.58493) -- (1.15731,-0.55217) -- (1.16550,-0.51893) -- (1.17296,-0.48519) -- (1.17968,-0.45094) -- (1.18565,-0.41619) -- (1.19084,-0.38091) -- (1.19525,-0.34510) -- (1.19885,-0.30874) -- (1.20163,-0.27182) -- (1.20355,-0.23432) -- (1.20460,-0.19623) -- (1.20473,-0.15752) -- (1.20392,-0.11816) -- (1.20211,-0.07814) -- (1.19928,-0.03742) -- (1.19536,0.00404) -- (1.19030,0.04628) -- (1.18402,0.08934) -- (1.17645,0.13328) -- (1.16749,0.17816) -- (1.15703,0.22405) -- (1.14495,0.27105) -- (1.13108,0.31925) -- (1.13108,0.31925) -- (1.11550,0.31415) -- (1.09933,0.30945) -- (1.08257,0.30518) -- (1.06524,0.30133) -- (1.04735,0.29791) -- (1.02892,0.29491) -- (1.00997,0.29235) -- (0.99050,0.29022) -- (0.97053,0.28853) -- (0.95008,0.28728) -- (0.92917,0.28646) -- (0.90780,0.28608) -- (0.88601,0.28614) -- (0.86380,0.28663) -- (0.84119,0.28757) -- (0.81820,0.28894) -- (0.79485,0.29075) -- (0.77115,0.29300) -- (0.74713,0.29567) -- (0.72280,0.29878) -- (0.69819,0.30232) -- (0.67331,0.30628) -- (0.64818,0.31067) -- (0.62283,0.31547) -- (0.59726,0.32069) -- (0.57151,0.32632) -- (0.54559,0.33236) -- (0.51952,0.33880) -- (0.51952,0.33880) -- (0.51230,0.29127) -- (0.50470,0.24490) -- (0.49677,0.19959) -- (0.48852,0.15527) -- (0.47999,0.11184) -- (0.47118,0.06927) -- (0.46212,0.02748) -- (0.45282,-0.01356) -- (0.44329,-0.05389) -- (0.43355,-0.09354) -- (0.42361,-0.13255) -- (0.41346,-0.17094) -- (0.40312,-0.20874) -- (0.39261,-0.24596) -- (0.38191,-0.28263) -- (0.37104,-0.31876) -- (0.36000,-0.35435) -- (0.34880,-0.38944) -- (0.33743,-0.42402) -- (0.32591,-0.45811) -- (0.31424,-0.49172) -- (0.30241,-0.52485) -- (0.29043,-0.55750) -- (0.27830,-0.58969) -- (0.26602,-0.62142) -- (0.25360,-0.65269) -- (0.24103,-0.68350) -- (0.22831,-0.71385) -- cycle;
   \draw[regional!45,densely dashed,line width=.45pt,opacity=.45] (-0.22831,0.71385) -- (-0.26518,0.72296) -- (-0.30184,0.73150) -- (-0.33826,0.73947) -- (-0.37441,0.74685) -- (-0.41027,0.75364) -- (-0.44581,0.75985) -- (-0.48100,0.76545) -- (-0.51581,0.77045) -- (-0.55021,0.77485) -- (-0.58418,0.77864) -- (-0.61769,0.78181) -- (-0.65071,0.78437) -- (-0.68322,0.78631) -- (-0.71520,0.78763) -- (-0.74661,0.78833) -- (-0.77743,0.78842) -- (-0.80764,0.78788) -- (-0.83722,0.78672) -- (-0.86614,0.78495) -- (-0.89438,0.78255) -- (-0.92191,0.77954) -- (-0.94872,0.77592) -- (-0.97478,0.77169) -- (-1.00008,0.76685) -- (-1.02458,0.76140) -- (-1.04829,0.75536) -- (-1.07116,0.74872) -- (-1.09320,0.74150) -- (-1.09320,0.74150) -- (-1.10564,0.71113) -- (-1.11738,0.68030) -- (-1.12842,0.64898) -- (-1.13876,0.61720) -- (-1.14839,0.58493) -- (-1.15731,0.55217) -- (-1.16550,0.51893) -- (-1.17296,0.48519) -- (-1.17968,0.45094) -- (-1.18565,0.41619) -- (-1.19084,0.38091) -- (-1.19525,0.34510) -- (-1.19885,0.30874) -- (-1.20163,0.27182) -- (-1.20355,0.23432) -- (-1.20460,0.19623) -- (-1.20473,0.15752) -- (-1.20392,0.11816) -- (-1.20211,0.07814) -- (-1.19928,0.03742) -- (-1.19536,-0.00404) -- (-1.19030,-0.04628) -- (-1.18402,-0.08934) -- (-1.17645,-0.13328) -- (-1.16749,-0.17816) -- (-1.15703,-0.22405) -- (-1.14495,-0.27105) -- (-1.13108,-0.31925) -- (-1.13108,-0.31925) -- (-1.11550,-0.31415) -- (-1.09933,-0.30945) -- (-1.08257,-0.30518) -- (-1.06524,-0.30133) -- (-1.04735,-0.29791) -- (-1.02892,-0.29491) -- (-1.00997,-0.29235) -- (-0.99050,-0.29022) -- (-0.97053,-0.28853) -- (-0.95008,-0.28728) -- (-0.92917,-0.28646) -- (-0.90780,-0.28608) -- (-0.88601,-0.28614) -- (-0.86380,-0.28663) -- (-0.84119,-0.28757) -- (-0.81820,-0.28894) -- (-0.79485,-0.29075) -- (-0.77115,-0.29300) -- (-0.74713,-0.29567) -- (-0.72280,-0.29878) -- (-0.69819,-0.30232) -- (-0.67331,-0.30628) -- (-0.64818,-0.31067) -- (-0.62283,-0.31547) -- (-0.59726,-0.32069) -- (-0.57151,-0.32632) -- (-0.54559,-0.33236) -- (-0.51952,-0.33880) -- (-0.51952,-0.33880) -- (-0.51230,-0.29127) -- (-0.50470,-0.24490) -- (-0.49677,-0.19959) -- (-0.48852,-0.15527) -- (-0.47999,-0.11184) -- (-0.47118,-0.06927) -- (-0.46212,-0.02748) -- (-0.45282,0.01356) -- (-0.44329,0.05389) -- (-0.43355,0.09354) -- (-0.42361,0.13255) -- (-0.41346,0.17094) -- (-0.40312,0.20874) -- (-0.39261,0.24596) -- (-0.38191,0.28263) -- (-0.37104,0.31876) -- (-0.36000,0.35435) -- (-0.34880,0.38944) -- (-0.33743,0.42402) -- (-0.32591,0.45811) -- (-0.31424,0.49172) -- (-0.30241,0.52485) -- (-0.29043,0.55750) -- (-0.27830,0.58969) -- (-0.26602,0.62142) -- (-0.25360,0.65269) -- (-0.24103,0.68350) -- (-0.22831,0.71385) -- cycle;
   \draw[draw=regional!30,line width=.3pt,densely dashed,opacity=.38] (-1.13108,-0.31925) -- (-1.14962,-0.32619) -- (-1.16720,-0.33375) -- (-1.18378,-0.34194) -- (-1.19937,-0.35074) -- (-1.21392,-0.36013) -- (-1.22744,-0.37012) -- (-1.23989,-0.38068) -- (-1.25128,-0.39181) -- (-1.26157,-0.40349) -- (-1.27076,-0.41570) -- (-1.27884,-0.42843) -- (-1.28580,-0.44168) -- (-1.29163,-0.45541) -- (-1.29632,-0.46961) -- (-1.29987,-0.48427) -- (-1.30227,-0.49936) -- (-1.30352,-0.51488) -- (-1.30361,-0.53080) -- (-1.30256,-0.54709) -- (-1.30036,-0.56375) -- (-1.29700,-0.58075) -- (-1.29251,-0.59807) -- (-1.28687,-0.61569) -- (-1.28011,-0.63359) -- (-1.27221,-0.65174) -- (-1.26321,-0.67013) -- (-1.25926,-0.67761);
   \draw[draw=regional!30,line width=.3pt,densely dashed,opacity=.38] (0.38733,-1.37655) -- (0.39545,-1.37419) -- (0.41493,-1.36789) -- (0.43347,-1.36096) -- (0.45104,-1.35339) -- (0.46763,-1.34521) -- (0.48322,-1.33641) -- (0.49777,-1.32701) -- (0.51129,-1.31703) -- (0.52374,-1.30646) -- (0.53512,-1.29534) -- (0.54542,-1.28366) -- (0.55461,-1.27145) -- (0.56269,-1.25871) -- (0.56965,-1.24547) -- (0.57548,-1.23174) -- (0.58017,-1.21754) -- (0.58372,-1.20288) -- (0.58612,-1.18778) -- (0.58737,-1.17227) -- (0.58746,-1.15635) -- (0.58641,-1.14005) -- (0.58420,-1.12339) -- (0.58085,-1.10639) -- (0.57636,-1.08907) -- (0.57072,-1.07145) -- (0.56395,-1.05356) -- (0.55606,-1.03541) -- (0.54706,-1.01702) -- (0.53695,-0.99842) -- (0.52576,-0.97964) -- (0.51348,-0.96068) -- (0.50015,-0.94159) -- (0.48577,-0.92238) -- (0.47036,-0.90307) -- (0.45394,-0.88368) -- (0.43653,-0.86425) -- (0.41815,-0.84480) -- (0.39883,-0.82534) -- (0.37859,-0.80590) -- (0.35745,-0.78651) -- (0.33543,-0.76719) -- (0.31257,-0.74796) -- (0.28890,-0.72885) -- (0.26444,-0.70988) -- (0.23921,-0.69107) -- (0.21326,-0.67245) -- (0.18662,-0.65404) -- (0.15931,-0.63585) -- (0.13137,-0.61792) -- (0.10283,-0.60027) -- (0.07374,-0.58291) -- (0.04411,-0.56587) -- (0.01400,-0.54917) -- (-0.01657,-0.53282) -- (-0.04755,-0.51686) -- (-0.07891,-0.50129) -- (-0.11061,-0.48614) -- (-0.14262,-0.47143) -- (-0.17488,-0.45716) -- (-0.20737,-0.44337) -- (-0.24004,-0.43007) -- (-0.27286,-0.41727) -- (-0.30578,-0.40499) -- (-0.33876,-0.39325) -- (-0.37177,-0.38205) -- (-0.40476,-0.37141) -- (-0.43770,-0.36135) -- (-0.47053,-0.35188) -- (-0.50323,-0.34301) -- (-0.53575,-0.33475) -- (-0.56806,-0.32710) -- (-0.60011,-0.32009) -- (-0.63187,-0.31371) -- (-0.66329,-0.30798) -- (-0.69434,-0.30291) -- (-0.72498,-0.29849) -- (-0.75517,-0.29473) -- (-0.78488,-0.29165) -- (-0.81407,-0.28923) -- (-0.84271,-0.28749) -- (-0.87075,-0.28643) -- (-0.89817,-0.28605) -- (-0.92493,-0.28635) -- (-0.95100,-0.28732) -- (-0.97635,-0.28898) -- (-1.00094,-0.29131) -- (-1.02475,-0.29431) -- (-1.04775,-0.29798) -- (-1.06991,-0.30231) -- (-1.09120,-0.30731) -- (-1.11160,-0.31296) -- (-1.13108,-0.31925);
   \draw[draw=regional!30,line width=.3pt,densely dashed,opacity=.38] (-1.09320,0.74150) -- (-1.11942,0.73169) -- (-1.14427,0.72099) -- (-1.16773,0.70942) -- (-1.18977,0.69698) -- (-1.21035,0.68369) -- (-1.22947,0.66957) -- (-1.24708,0.65463) -- (-1.26318,0.63889) -- (-1.27773,0.62238) -- (-1.29073,0.60510) -- (-1.30216,0.58709) -- (-1.31200,0.56837) -- (-1.31632,0.55875);
   \draw[draw=regional!30,line width=.3pt,densely dashed,opacity=.38] (1.31632,-0.55875) -- (1.32024,-0.54895) -- (1.32688,-0.52887) -- (1.33189,-0.50813) -- (1.33529,-0.48678) -- (1.33706,-0.46484) -- (1.33719,-0.44233) -- (1.33570,-0.41928) -- (1.33259,-0.39573) -- (1.32784,-0.37168) -- (1.32149,-0.34719) -- (1.31352,-0.32227) -- (1.30395,-0.29696) -- (1.29279,-0.27129) -- (1.28005,-0.24529) -- (1.26576,-0.21899) -- (1.24993,-0.19242) -- (1.23257,-0.16562) -- (1.21371,-0.13862) -- (1.19337,-0.11144) -- (1.17158,-0.08413) -- (1.14836,-0.05672) -- (1.12374,-0.02924) -- (1.09775,-0.00173) -- (1.07043,0.02579) -- (1.04180,0.05328) -- (1.01190,0.08070) -- (0.98077,0.10802) -- (0.94844,0.13521) -- (0.91496,0.16224) -- (0.88036,0.18907) -- (0.84469,0.21567) -- (0.80800,0.24200) -- (0.77031,0.26804) -- (0.73169,0.29376) -- (0.69218,0.31912) -- (0.65182,0.34408) -- (0.61067,0.36863) -- (0.56878,0.39273) -- (0.52619,0.41635) -- (0.48296,0.43947) -- (0.43915,0.46205) -- (0.39480,0.48406) -- (0.34996,0.50549) -- (0.30470,0.52630) -- (0.25907,0.54647) -- (0.21313,0.56597) -- (0.16692,0.58478) -- (0.12051,0.60288) -- (0.07396,0.62025) -- (0.02731,0.63686) -- (-0.01937,0.65269) -- (-0.06602,0.66773) -- (-0.11260,0.68196) -- (-0.15904,0.69536) -- (-0.20528,0.70790) -- (-0.25128,0.71959) -- (-0.29696,0.73040) -- (-0.34229,0.74032) -- (-0.38720,0.74933) -- (-0.43164,0.75744) -- (-0.47555,0.76462) -- (-0.51888,0.77087) -- (-0.56158,0.77618) -- (-0.60360,0.78054) -- (-0.64488,0.78396) -- (-0.68537,0.78642) -- (-0.72503,0.78792) -- (-0.76381,0.78846) -- (-0.80165,0.78804) -- (-0.83852,0.78666) -- (-0.87437,0.78432) -- (-0.90915,0.78102) -- (-0.94283,0.77678) -- (-0.97535,0.77159) -- (-1.00669,0.76545) -- (-1.03680,0.75839) -- (-1.06565,0.75040) -- (-1.09320,0.74150);
   \draw[draw=regional!30,line width=.3pt,densely dashed,opacity=.38] (1.25926,0.67761) -- (1.25310,0.68872) -- (1.24191,0.70751) -- (1.22963,0.72646) -- (1.21630,0.74556) -- (1.20192,0.76477) -- (1.18651,0.78408) -- (1.17009,0.80346) -- (1.15268,0.82289) -- (1.13430,0.84235) -- (1.11498,0.86181) -- (1.09474,0.88125) -- (1.07360,0.90064) -- (1.05158,0.91996) -- (1.02873,0.93918) -- (1.00505,0.95829) -- (0.98059,0.97726) -- (0.95537,0.99607) -- (0.92942,1.01470) -- (0.90277,1.03311) -- (0.87546,1.05129) -- (0.84752,1.06922) -- (0.81898,1.08688) -- (0.78989,1.10424) -- (0.76026,1.12128) -- (0.73015,1.13798) -- (0.69958,1.15432) -- (0.66860,1.17029) -- (0.63724,1.18586) -- (0.60554,1.20101) -- (0.57353,1.21572) -- (0.54127,1.22998) -- (0.50878,1.24377) -- (0.47611,1.25708) -- (0.44329,1.26988) -- (0.41037,1.28216) -- (0.37739,1.29390) -- (0.34438,1.30510) -- (0.31139,1.31573) -- (0.27846,1.32579) -- (0.24562,1.33526) -- (0.21292,1.34414) -- (0.18040,1.35240) -- (0.14809,1.36004) -- (0.11604,1.36706) -- (0.08428,1.37343) -- (0.05286,1.37916) -- (0.02181,1.38424) -- (-0.00883,1.38866) -- (-0.03902,1.39241) -- (-0.06873,1.39550) -- (-0.09792,1.39792) -- (-0.12655,1.39965) -- (-0.15460,1.40071) -- (-0.18202,1.40110) -- (-0.20878,1.40080) -- (-0.23485,1.39982) -- (-0.26020,1.39817) -- (-0.28479,1.39584) -- (-0.30860,1.39284) -- (-0.33160,1.38917) -- (-0.35376,1.38483) -- (-0.37505,1.37984) -- (-0.38733,1.37655);
   \draw[draw=regional!30,line width=.3pt,densely dashed,opacity=.38] (-0.50640,-1.19299) -- (-0.54424,-1.16639) -- (-0.58142,-1.13836) -- (-0.61789,-1.10895) -- (-0.65361,-1.07819) -- (-0.68853,-1.04611) -- (-0.72262,-1.01276) -- (-0.75582,-0.97817) -- (-0.78810,-0.94239) -- (-0.81943,-0.90547) -- (-0.84975,-0.86744) -- (-0.87904,-0.82835) -- (-0.90726,-0.78826) -- (-0.93437,-0.74720) -- (-0.96035,-0.70524) -- (-0.98515,-0.66241) -- (-1.00875,-0.61878) -- (-1.03113,-0.57440) -- (-1.05225,-0.52931) -- (-1.07208,-0.48358) -- (-1.09062,-0.43726) -- (-1.10782,-0.39041) -- (-1.12367,-0.34308) -- (-1.13815,-0.29533) -- (-1.15125,-0.24723) -- (-1.16294,-0.19882) -- (-1.17322,-0.15017) -- (-1.18207,-0.10134) -- (-1.18947,-0.05238) -- (-1.19543,-0.00336) -- (-1.19993,0.04566) -- (-1.20297,0.09463) -- (-1.20455,0.14348) -- (-1.20466,0.19216) -- (-1.20329,0.24060) -- (-1.20047,0.28875) -- (-1.19618,0.33655) -- (-1.19043,0.38394) -- (-1.18323,0.43086) -- (-1.17459,0.47726) -- (-1.16452,0.52307) -- (-1.15304,0.56825) -- (-1.14014,0.61273) -- (-1.12586,0.65647) -- (-1.11020,0.69941) -- (-1.09320,0.74150) -- (-1.07486,0.78268) -- (-1.05521,0.82291) -- (-1.03428,0.86214) -- (-1.01208,0.90031) -- (-0.98865,0.93739) -- (-0.96402,0.97333) -- (-0.93822,1.00808) -- (-0.91127,1.04161) -- (-0.88321,1.07386) -- (-0.85407,1.10481) -- (-0.82390,1.13441) -- (-0.79272,1.16263) -- (-0.76057,1.18943) -- (-0.72750,1.21478) -- (-0.69354,1.23865) -- (-0.65874,1.26102) -- (-0.62313,1.28184) -- (-0.58676,1.30111) -- (-0.54968,1.31879) -- (-0.51193,1.33486) -- (-0.47356,1.34931) -- (-0.46996,1.35057);
   \draw[draw=regional!30,line width=.3pt,densely dashed,opacity=.38] (0.46996,-1.35057) -- (0.43461,-1.36211) -- (0.39513,-1.37326) -- (0.35517,-1.38273) -- (0.31477,-1.39051) -- (0.27399,-1.39660) -- (0.23288,-1.40099) -- (0.19149,-1.40368) -- (0.14986,-1.40465) -- (0.10805,-1.40391) -- (0.06610,-1.40146) -- (0.02408,-1.39731) -- (-0.01797,-1.39145) -- (-0.06000,-1.38389) -- (-0.10196,-1.37465) -- (-0.14379,-1.36374) -- (-0.18545,-1.35116) -- (-0.22688,-1.33694) -- (-0.26804,-1.32109) -- (-0.30887,-1.30363) -- (-0.34932,-1.28458) -- (-0.38935,-1.26397) -- (-0.42890,-1.24181) -- (-0.46793,-1.21814) -- (-0.50640,-1.19299);
   \draw[draw=regional!30,line width=.3pt,densely dashed,opacity=.38] (-0.50640,-1.19299) -- (-0.55207,-1.18058) -- (-0.59708,-1.16673) -- (-0.64136,-1.15146) -- (-0.68486,-1.13479) -- (-0.72752,-1.11673) -- (-0.76929,-1.09732) -- (-0.81013,-1.07656) -- (-0.84999,-1.05450) -- (-0.88880,-1.03115) -- (-0.92654,-1.00654) -- (-0.96314,-0.98071) -- (-0.99857,-0.95368) -- (-1.03279,-0.92550) -- (-1.06574,-0.89618) -- (-1.09740,-0.86577) -- (-1.12772,-0.83431) -- (-1.15667,-0.80183) -- (-1.18421,-0.76837) -- (-1.21030,-0.73398) -- (-1.23492,-0.69869) -- (-1.25804,-0.66255) -- (-1.27962,-0.62561) -- (-1.29965,-0.58790) -- (-1.31809,-0.54948) -- (-1.33492,-0.51038) -- (-1.35013,-0.47067) -- (-1.36203,-0.43562);
   \draw[draw=regional!30,line width=.3pt,densely dashed,opacity=.38] (1.36203,0.43562) -- (1.36369,0.43038) -- (1.37560,0.38957) -- (1.38582,0.34828) -- (1.39436,0.30657) -- (1.40120,0.26448) -- (1.40633,0.22207) -- (1.40975,0.17939) -- (1.41145,0.13650) -- (1.41143,0.09343) -- (1.40969,0.05026) -- (1.40624,0.00702) -- (1.40107,-0.03623) -- (1.39419,-0.07943) -- (1.38562,-0.12254) -- (1.37535,-0.16549) -- (1.36342,-0.20825) -- (1.34982,-0.25075) -- (1.33457,-0.29294) -- (1.31770,-0.33478) -- (1.29923,-0.37621) -- (1.27917,-0.41719) -- (1.25755,-0.45765) -- (1.23440,-0.49756) -- (1.20975,-0.53686) -- (1.18362,-0.57550) -- (1.15605,-0.61345) -- (1.12707,-0.65065) -- (1.09672,-0.68705) -- (1.06504,-0.72262) -- (1.03205,-0.75731) -- (0.99781,-0.79107) -- (0.96235,-0.82387) -- (0.92572,-0.85567) -- (0.88797,-0.88643) -- (0.84913,-0.91610) -- (0.80925,-0.94466) -- (0.76839,-0.97207) -- (0.72659,-0.99829) -- (0.68391,-1.02330) -- (0.64040,-1.04706) -- (0.59610,-1.06954) -- (0.55108,-1.09073) -- (0.50539,-1.11058) -- (0.45908,-1.12908) -- (0.41221,-1.14620) -- (0.36484,-1.16193) -- (0.31703,-1.17624) -- (0.26883,-1.18912) -- (0.22030,-1.20055) -- (0.17150,-1.21052) -- (0.12250,-1.21901) -- (0.07334,-1.22602) -- (0.02410,-1.23154) -- (-0.02518,-1.23555) -- (-0.07442,-1.23806) -- (-0.12357,-1.23906) -- (-0.17257,-1.23855) -- (-0.22136,-1.23653) -- (-0.26989,-1.23300) -- (-0.31808,-1.22798) -- (-0.36588,-1.22145) -- (-0.41324,-1.21344) -- (-0.46010,-1.20395) -- (-0.50640,-1.19299);
   \draw[draw=regional!30,line width=.3pt,densely dashed,opacity=.38] (-0.50640,-1.19299) -- (-0.53297,-1.20162) -- (-0.55890,-1.20878) -- (-0.58414,-1.21448) -- (-0.60867,-1.21869) -- (-0.63247,-1.22141) -- (-0.65549,-1.22265) -- (-0.67771,-1.22240) -- (-0.69911,-1.22066) -- (-0.71965,-1.21743) -- (-0.73932,-1.21272) -- (-0.75809,-1.20653) -- (-0.77593,-1.19888) -- (-0.79283,-1.18976) -- (-0.80271,-1.18345);
   \draw[draw=regional!30,line width=.3pt,densely dashed,opacity=.38] (0.80271,1.18345) -- (0.80876,1.17919) -- (0.82371,1.16718) -- (0.83765,1.15376) -- (0.85057,1.13892) -- (0.86246,1.12270) -- (0.87330,1.10512) -- (0.88307,1.08618) -- (0.89177,1.06592) -- (0.89938,1.04437) -- (0.90589,1.02154) -- (0.91130,0.99746) -- (0.91560,0.97218) -- (0.91878,0.94570) -- (0.92085,0.91808) -- (0.92179,0.88933) -- (0.92161,0.85950) -- (0.92031,0.82863) -- (0.91788,0.79675) -- (0.91434,0.76389) -- (0.90969,0.73010) -- (0.90392,0.69543) -- (0.89706,0.65991) -- (0.88910,0.62358) -- (0.88005,0.58649) -- (0.86994,0.54869) -- (0.85877,0.51022) -- (0.84655,0.47113) -- (0.83329,0.43147) -- (0.81903,0.39128) -- (0.80376,0.35061) -- (0.78752,0.30952) -- (0.77031,0.26804) -- (0.75217,0.22625) -- (0.73311,0.18417) -- (0.71316,0.14187) -- (0.69234,0.09940) -- (0.67068,0.05681) -- (0.64819,0.01415) -- (0.62492,-0.02853) -- (0.60089,-0.07117) -- (0.57613,-0.11373) -- (0.55066,-0.15615) -- (0.52452,-0.19838) -- (0.49775,-0.24036) -- (0.47036,-0.28206) -- (0.44241,-0.32341) -- (0.41391,-0.36436) -- (0.38491,-0.40488) -- (0.35545,-0.44489) -- (0.32555,-0.48437) -- (0.29525,-0.52326) -- (0.26459,-0.56151) -- (0.23361,-0.59907) -- (0.20235,-0.63591) -- (0.17083,-0.67197) -- (0.13912,-0.70721) -- (0.10723,-0.74159) -- (0.07521,-0.77507) -- (0.04310,-0.80760) -- (0.01093,-0.83915) -- (-0.02124,-0.86967) -- (-0.05339,-0.89914) -- (-0.08548,-0.92751) -- (-0.11746,-0.95475) -- (-0.14930,-0.98083) -- (-0.18096,-1.00571) -- (-0.21239,-1.02937) -- (-0.24357,-1.05177) -- (-0.27445,-1.07290) -- (-0.30500,-1.09271) -- (-0.33517,-1.11119) -- (-0.36494,-1.12832) -- (-0.39426,-1.14408) -- (-0.42310,-1.15844) -- (-0.45143,-1.17139) -- (-0.47920,-1.18291) -- (-0.50640,-1.19299);
   \draw[draw=regional!30,line width=.3pt,densely dashed,opacity=.38] (-0.50640,-1.19299) -- (-0.49812,-1.21718) -- (-0.48924,-1.23988) -- (-0.47976,-1.26108) -- (-0.46969,-1.28073) -- (-0.45906,-1.29883) -- (-0.44786,-1.31534) -- (-0.43612,-1.33025) -- (-0.42385,-1.34354) -- (-0.41106,-1.35520) -- (-0.39777,-1.36520) -- (-0.38399,-1.37354) -- (-0.36975,-1.38020) -- (-0.35506,-1.38519) -- (-0.35213,-1.38597);
   \draw[draw=regional!30,line width=.3pt,densely dashed,opacity=.38] (0.35213,1.38597) -- (0.33993,1.38848) -- (0.32439,1.39009) -- (0.30846,1.39000) -- (0.29215,1.38822) -- (0.27548,1.38474) -- (0.25848,1.37958) -- (0.24116,1.37274) -- (0.22355,1.36423) -- (0.20567,1.35405) -- (0.18754,1.34223) -- (0.16917,1.32877) -- (0.15061,1.31369) -- (0.13185,1.29700) -- (0.11294,1.27874) -- (0.09389,1.25893) -- (0.07473,1.23757) -- (0.05547,1.21471) -- (0.03615,1.19037) -- (0.01678,1.16458) -- (-0.00261,1.13737) -- (-0.02199,1.10878) -- (-0.04135,1.07883) -- (-0.06065,1.04757) -- (-0.07989,1.01503) -- (-0.09902,0.98126) -- (-0.11804,0.94629) -- (-0.13691,0.91017) -- (-0.15561,0.87294) -- (-0.17413,0.83465) -- (-0.19243,0.79533) -- (-0.21050,0.75505) -- (-0.22831,0.71385) -- (-0.24585,0.67179) -- (-0.26308,0.62890) -- (-0.28000,0.58524) -- (-0.29657,0.54088) -- (-0.31278,0.49585) -- (-0.32861,0.45022) -- (-0.34404,0.40404) -- (-0.35905,0.35737) -- (-0.37362,0.31026) -- (-0.38774,0.26278) -- (-0.40139,0.21497) -- (-0.41454,0.16690) -- (-0.42720,0.11863) -- (-0.43933,0.07022) -- (-0.45092,0.02172) -- (-0.46197,-0.02681) -- (-0.47245,-0.07530) -- (-0.48236,-0.12370) -- (-0.49168,-0.17195) -- (-0.50040,-0.22000) -- (-0.50851,-0.26777) -- (-0.51601,-0.31522) -- (-0.52287,-0.36228) -- (-0.52910,-0.40890) -- (-0.53468,-0.45503) -- (-0.53961,-0.50060) -- (-0.54388,-0.54556) -- (-0.54749,-0.58985) -- (-0.55044,-0.63343) -- (-0.55271,-0.67623) -- (-0.55431,-0.71822) -- (-0.55523,-0.75932) -- (-0.55548,-0.79950) -- (-0.55505,-0.83871) -- (-0.55394,-0.87689) -- (-0.55216,-0.91401) -- (-0.54971,-0.95001) -- (-0.54659,-0.98486) -- (-0.54280,-1.01850) -- (-0.53835,-1.05091) -- (-0.53324,-1.08203) -- (-0.52749,-1.11184) -- (-0.52109,-1.14029) -- (-0.51405,-1.16735) -- (-0.50640,-1.19299);
   \draw[draw=black!43,line width=.4pt,opacity=.65] (-1.25926,-0.67761) -- (-1.25310,-0.68872) -- (-1.24191,-0.70751) -- (-1.22963,-0.72646) -- (-1.21630,-0.74556) -- (-1.20192,-0.76477) -- (-1.18651,-0.78408) -- (-1.17009,-0.80346) -- (-1.15268,-0.82289) -- (-1.13430,-0.84235) -- (-1.11498,-0.86181) -- (-1.09474,-0.88125) -- (-1.07360,-0.90064) -- (-1.05158,-0.91996) -- (-1.02873,-0.93918) -- (-1.00505,-0.95829) -- (-0.98059,-0.97726) -- (-0.95537,-0.99607) -- (-0.92942,-1.01470) -- (-0.90277,-1.03311) -- (-0.87546,-1.05129) -- (-0.84752,-1.06922) -- (-0.81898,-1.08688) -- (-0.78989,-1.10424) -- (-0.76026,-1.12128) -- (-0.73015,-1.13798) -- (-0.69958,-1.15432) -- (-0.66860,-1.17029) -- (-0.63724,-1.18586) -- (-0.60554,-1.20101) -- (-0.57353,-1.21572) -- (-0.54127,-1.22998) -- (-0.50878,-1.24377) -- (-0.47611,-1.25708) -- (-0.44329,-1.26988) -- (-0.41037,-1.28216) -- (-0.37739,-1.29390) -- (-0.34438,-1.30510) -- (-0.31139,-1.31573) -- (-0.27846,-1.32579) -- (-0.24562,-1.33526) -- (-0.21292,-1.34414) -- (-0.18040,-1.35240) -- (-0.14809,-1.36004) -- (-0.11604,-1.36706) -- (-0.08428,-1.37343) -- (-0.05286,-1.37916) -- (-0.02181,-1.38424) -- (0.00883,-1.38866) -- (0.03902,-1.39241) -- (0.06873,-1.39550) -- (0.09792,-1.39792) -- (0.12655,-1.39965) -- (0.15460,-1.40071) -- (0.18202,-1.40110) -- (0.20878,-1.40080) -- (0.23485,-1.39982) -- (0.26020,-1.39817) -- (0.28479,-1.39584) -- (0.30860,-1.39284) -- (0.33160,-1.38917) -- (0.35376,-1.38483) -- (0.37505,-1.37984) -- (0.38733,-1.37655);
   \draw[draw=black!43,line width=.4pt,opacity=.65] (-1.31632,0.55875) -- (-1.32024,0.54895) -- (-1.32688,0.52887) -- (-1.33189,0.50813) -- (-1.33529,0.48678) -- (-1.33706,0.46484) -- (-1.33719,0.44233) -- (-1.33570,0.41928) -- (-1.33259,0.39573) -- (-1.32784,0.37168) -- (-1.32149,0.34719) -- (-1.31352,0.32227) -- (-1.30395,0.29696) -- (-1.29279,0.27129) -- (-1.28005,0.24529) -- (-1.26576,0.21899) -- (-1.24993,0.19242) -- (-1.23257,0.16562) -- (-1.21371,0.13862) -- (-1.19337,0.11144) -- (-1.17158,0.08413) -- (-1.14836,0.05672) -- (-1.12374,0.02924) -- (-1.09775,0.00173) -- (-1.07043,-0.02579) -- (-1.04180,-0.05328) -- (-1.01190,-0.08070) -- (-0.98077,-0.10802) -- (-0.94844,-0.13521) -- (-0.91496,-0.16224) -- (-0.88036,-0.18907) -- (-0.84469,-0.21567) -- (-0.80800,-0.24200) -- (-0.77031,-0.26804) -- (-0.73169,-0.29376) -- (-0.69218,-0.31912) -- (-0.65182,-0.34408) -- (-0.61067,-0.36863) -- (-0.56878,-0.39273) -- (-0.52619,-0.41635) -- (-0.48296,-0.43947) -- (-0.43915,-0.46205) -- (-0.39480,-0.48406) -- (-0.34996,-0.50549) -- (-0.30470,-0.52630) -- (-0.25907,-0.54647) -- (-0.21313,-0.56597) -- (-0.16692,-0.58478) -- (-0.12051,-0.60288) -- (-0.07396,-0.62025) -- (-0.02731,-0.63686) -- (0.01937,-0.65269) -- (0.06602,-0.66773) -- (0.11260,-0.68196) -- (0.15904,-0.69536) -- (0.20528,-0.70790) -- (0.25128,-0.71959) -- (0.29696,-0.73040) -- (0.34229,-0.74032) -- (0.38720,-0.74933) -- (0.43164,-0.75744) -- (0.47555,-0.76462) -- (0.51888,-0.77087) -- (0.56158,-0.77618) -- (0.60360,-0.78054) -- (0.64488,-0.78396) -- (0.68537,-0.78642) -- (0.72503,-0.78792) -- (0.76381,-0.78846) -- (0.80165,-0.78804) -- (0.83852,-0.78666) -- (0.87437,-0.78432) -- (0.90915,-0.78102) -- (0.94283,-0.77678) -- (0.97535,-0.77159) -- (1.00669,-0.76545) -- (1.03680,-0.75839) -- (1.06565,-0.75040) -- (1.09320,-0.74150) -- (1.11942,-0.73169) -- (1.14427,-0.72099) -- (1.16773,-0.70942) -- (1.18977,-0.69698) -- (1.21035,-0.68369) -- (1.22947,-0.66957) -- (1.24708,-0.65463) -- (1.26318,-0.63889) -- (1.27773,-0.62238) -- (1.29073,-0.60510) -- (1.30216,-0.58709) -- (1.31200,-0.56837) -- (1.31632,-0.55875);
   \draw[draw=black!43,line width=.4pt,opacity=.65] (-0.41493,1.36789) -- (-0.43347,1.36096) -- (-0.45104,1.35339) -- (-0.46763,1.34521) -- (-0.48322,1.33641) -- (-0.49777,1.32701) -- (-0.51129,1.31703) -- (-0.52374,1.30646) -- (-0.53512,1.29534) -- (-0.54542,1.28366) -- (-0.55461,1.27145) -- (-0.56269,1.25871) -- (-0.56965,1.24547) -- (-0.57548,1.23174) -- (-0.58017,1.21754) -- (-0.58372,1.20288) -- (-0.58612,1.18778) -- (-0.58737,1.17227) -- (-0.58746,1.15635) -- (-0.58641,1.14005) -- (-0.58420,1.12339) -- (-0.58085,1.10639) -- (-0.57636,1.08907) -- (-0.57072,1.07145) -- (-0.56395,1.05356) -- (-0.55606,1.03541) -- (-0.54706,1.01702) -- (-0.53695,0.99842) -- (-0.52576,0.97964) -- (-0.51348,0.96068) -- (-0.50015,0.94159) -- (-0.48577,0.92238) -- (-0.47036,0.90307) -- (-0.45394,0.88368) -- (-0.43653,0.86425) -- (-0.41815,0.84480) -- (-0.39883,0.82534) -- (-0.37859,0.80590) -- (-0.35745,0.78651) -- (-0.33543,0.76719) -- (-0.31257,0.74796) -- (-0.28890,0.72885) -- (-0.26444,0.70988) -- (-0.23921,0.69107) -- (-0.21326,0.67245) -- (-0.18662,0.65404) -- (-0.15931,0.63585) -- (-0.13137,0.61792) -- (-0.10283,0.60027) -- (-0.07374,0.58291) -- (-0.04411,0.56587) -- (-0.01400,0.54917) -- (0.01657,0.53282) -- (0.04755,0.51686) -- (0.07891,0.50129) -- (0.11061,0.48614) -- (0.14262,0.47143) -- (0.17488,0.45716) -- (0.20737,0.44337) -- (0.24004,0.43007) -- (0.27286,0.41727) -- (0.30578,0.40499) -- (0.33876,0.39325) -- (0.37177,0.38205) -- (0.40476,0.37141) -- (0.43770,0.36135) -- (0.47053,0.35188) -- (0.50323,0.34301) -- (0.53575,0.33475) -- (0.56806,0.32710) -- (0.60011,0.32009) -- (0.63187,0.31371) -- (0.66329,0.30798) -- (0.69434,0.30291) -- (0.72498,0.29849) -- (0.75517,0.29473) -- (0.78488,0.29165) -- (0.81407,0.28923) -- (0.84271,0.28749) -- (0.87075,0.28643) -- (0.89817,0.28605) -- (0.92493,0.28635) -- (0.95100,0.28732) -- (0.97635,0.28898) -- (1.00094,0.29131) -- (1.02475,0.29431) -- (1.04775,0.29798) -- (1.06991,0.30231) -- (1.09120,0.30731) -- (1.11160,0.31296) -- (1.13108,0.31925) -- (1.14962,0.32619) -- (1.16720,0.33375) -- (1.18378,0.34194) -- (1.19937,0.35074) -- (1.21392,0.36013) -- (1.22744,0.37012) -- (1.23989,0.38068) -- (1.25128,0.39181) -- (1.26157,0.40349) -- (1.27076,0.41570) -- (1.27884,0.42843) -- (1.28580,0.44168) -- (1.29163,0.45541) -- (1.29632,0.46961) -- (1.29987,0.48427) -- (1.30227,0.49936) -- (1.30352,0.51488) -- (1.30361,0.53080) -- (1.30256,0.54709) -- (1.30036,0.56375) -- (1.29700,0.58075) -- (1.29251,0.59807) -- (1.28687,0.61569) -- (1.28011,0.63359) -- (1.27221,0.65174) -- (1.26321,0.67013) -- (1.25926,0.67761);
   \draw[draw=black!43,line width=.4pt,opacity=.65] (-0.38733,1.37655) -- (-0.39545,1.37419) -- (-0.41493,1.36789);
   \draw[draw=black!43,line width=.4pt,opacity=.65] (-0.46996,1.35057) -- (-0.43461,1.36211) -- (-0.39513,1.37326) -- (-0.35517,1.38273) -- (-0.31477,1.39051) -- (-0.27399,1.39660) -- (-0.23288,1.40099) -- (-0.19149,1.40368) -- (-0.14986,1.40465) -- (-0.10805,1.40391) -- (-0.06610,1.40146) -- (-0.02408,1.39731) -- (0.01797,1.39145) -- (0.06000,1.38389) -- (0.10196,1.37465) -- (0.14379,1.36374) -- (0.18545,1.35116) -- (0.22688,1.33694) -- (0.26804,1.32109) -- (0.30887,1.30363) -- (0.34932,1.28458) -- (0.38935,1.26397) -- (0.42890,1.24181) -- (0.46793,1.21814) -- (0.50640,1.19299) -- (0.54424,1.16639) -- (0.58142,1.13836) -- (0.61789,1.10895) -- (0.65361,1.07819) -- (0.68853,1.04611) -- (0.72262,1.01276) -- (0.75582,0.97817) -- (0.78810,0.94239) -- (0.81943,0.90547) -- (0.84975,0.86744) -- (0.87904,0.82835) -- (0.90726,0.78826) -- (0.93437,0.74720) -- (0.96035,0.70524) -- (0.98515,0.66241) -- (1.00875,0.61878) -- (1.03113,0.57440) -- (1.05225,0.52931) -- (1.07208,0.48358) -- (1.09062,0.43726) -- (1.10782,0.39041) -- (1.12367,0.34308) -- (1.13815,0.29533) -- (1.15125,0.24723) -- (1.16294,0.19882) -- (1.17322,0.15017) -- (1.18207,0.10134) -- (1.18947,0.05238) -- (1.19543,0.00336) -- (1.19993,-0.04566) -- (1.20297,-0.09463) -- (1.20455,-0.14348) -- (1.20466,-0.19216) -- (1.20329,-0.24060) -- (1.20047,-0.28875) -- (1.19618,-0.33655) -- (1.19043,-0.38394) -- (1.18323,-0.43086) -- (1.17459,-0.47726) -- (1.16452,-0.52307) -- (1.15304,-0.56825) -- (1.14014,-0.61273) -- (1.12586,-0.65647) -- (1.11020,-0.69941) -- (1.09320,-0.74150) -- (1.07486,-0.78268) -- (1.05521,-0.82291) -- (1.03428,-0.86214) -- (1.01208,-0.90031) -- (0.98865,-0.93739) -- (0.96402,-0.97333) -- (0.93822,-1.00808) -- (0.91127,-1.04161) -- (0.88321,-1.07386) -- (0.85407,-1.10481) -- (0.82390,-1.13441) -- (0.79272,-1.16263) -- (0.76057,-1.18943) -- (0.72750,-1.21478) -- (0.69354,-1.23865) -- (0.65874,-1.26102) -- (0.62313,-1.28184) -- (0.58676,-1.30111) -- (0.54968,-1.31879) -- (0.51193,-1.33486) -- (0.47356,-1.34931) -- (0.46996,-1.35057);
   \draw[draw=black!43,line width=.4pt,opacity=.65] (-1.36203,-0.43562) -- (-1.36369,-0.43038) -- (-1.37560,-0.38957) -- (-1.38582,-0.34828) -- (-1.39436,-0.30657) -- (-1.40120,-0.26448) -- (-1.40633,-0.22207) -- (-1.40975,-0.17939) -- (-1.41145,-0.13650) -- (-1.41143,-0.09343) -- (-1.40969,-0.05026) -- (-1.40624,-0.00702) -- (-1.40107,0.03623) -- (-1.39419,0.07943) -- (-1.38562,0.12254) -- (-1.37535,0.16549) -- (-1.36342,0.20825) -- (-1.34982,0.25075) -- (-1.33457,0.29294) -- (-1.31770,0.33478) -- (-1.29923,0.37621) -- (-1.27917,0.41719) -- (-1.25755,0.45765) -- (-1.23440,0.49756) -- (-1.20975,0.53686) -- (-1.18362,0.57550) -- (-1.15605,0.61345) -- (-1.12707,0.65065) -- (-1.09672,0.68705) -- (-1.06504,0.72262) -- (-1.03205,0.75731) -- (-0.99781,0.79107) -- (-0.96235,0.82387) -- (-0.92572,0.85567) -- (-0.88797,0.88643) -- (-0.84913,0.91610) -- (-0.80925,0.94466) -- (-0.76839,0.97207) -- (-0.72659,0.99829) -- (-0.68391,1.02330) -- (-0.64040,1.04706) -- (-0.59610,1.06954) -- (-0.55108,1.09073) -- (-0.50539,1.11058) -- (-0.45908,1.12908) -- (-0.41221,1.14620) -- (-0.36484,1.16193) -- (-0.31703,1.17624) -- (-0.26883,1.18912) -- (-0.22030,1.20055) -- (-0.17150,1.21052) -- (-0.12250,1.21901) -- (-0.07334,1.22602) -- (-0.02410,1.23154) -- (0.02518,1.23555) -- (0.07442,1.23806) -- (0.12357,1.23906) -- (0.17257,1.23855) -- (0.22136,1.23653) -- (0.26989,1.23300) -- (0.31808,1.22798) -- (0.36588,1.22145) -- (0.41324,1.21344) -- (0.46010,1.20395) -- (0.50640,1.19299) -- (0.55207,1.18058) -- (0.59708,1.16673) -- (0.64136,1.15146) -- (0.68486,1.13479) -- (0.72752,1.11673) -- (0.76929,1.09732) -- (0.81013,1.07656) -- (0.84999,1.05450) -- (0.88880,1.03115) -- (0.92654,1.00654) -- (0.96314,0.98071) -- (0.99857,0.95368) -- (1.03279,0.92550) -- (1.06574,0.89618) -- (1.09740,0.86577) -- (1.12772,0.83431) -- (1.15667,0.80183) -- (1.18421,0.76837) -- (1.21030,0.73398) -- (1.23492,0.69869) -- (1.25804,0.66255) -- (1.27962,0.62561) -- (1.29965,0.58790) -- (1.31809,0.54948) -- (1.33492,0.51038) -- (1.35013,0.47067) -- (1.36203,0.43562);
   \draw[draw=black!43,line width=.4pt,opacity=.65] (-0.80271,-1.18345) -- (-0.80876,-1.17919) -- (-0.82371,-1.16718) -- (-0.83765,-1.15376) -- (-0.85057,-1.13892) -- (-0.86246,-1.12270) -- (-0.87330,-1.10512) -- (-0.88307,-1.08618) -- (-0.89177,-1.06592) -- (-0.89938,-1.04437) -- (-0.90589,-1.02154) -- (-0.91130,-0.99746) -- (-0.91560,-0.97218) -- (-0.91878,-0.94570) -- (-0.92085,-0.91808) -- (-0.92179,-0.88933) -- (-0.92161,-0.85950) -- (-0.92031,-0.82863) -- (-0.91788,-0.79675) -- (-0.91434,-0.76389) -- (-0.90969,-0.73010) -- (-0.90392,-0.69543) -- (-0.89706,-0.65991) -- (-0.88910,-0.62358) -- (-0.88005,-0.58649) -- (-0.86994,-0.54869) -- (-0.85877,-0.51022) -- (-0.84655,-0.47113) -- (-0.83329,-0.43147) -- (-0.81903,-0.39128) -- (-0.80376,-0.35061) -- (-0.78752,-0.30952) -- (-0.77031,-0.26804) -- (-0.75217,-0.22625) -- (-0.73311,-0.18417) -- (-0.71316,-0.14187) -- (-0.69234,-0.09940) -- (-0.67068,-0.05681) -- (-0.64819,-0.01415) -- (-0.62492,0.02853) -- (-0.60089,0.07117) -- (-0.57613,0.11373) -- (-0.55066,0.15615) -- (-0.52452,0.19838) -- (-0.49775,0.24036) -- (-0.47036,0.28206) -- (-0.44241,0.32341) -- (-0.41391,0.36436) -- (-0.38491,0.40488) -- (-0.35545,0.44489) -- (-0.32555,0.48437) -- (-0.29525,0.52326) -- (-0.26459,0.56151) -- (-0.23361,0.59907) -- (-0.20235,0.63591) -- (-0.17083,0.67197) -- (-0.13912,0.70721) -- (-0.10723,0.74159) -- (-0.07521,0.77507) -- (-0.04310,0.80760) -- (-0.01093,0.83915) -- (0.02124,0.86967) -- (0.05339,0.89914) -- (0.08548,0.92751) -- (0.11746,0.95475) -- (0.14930,0.98083) -- (0.18096,1.00571) -- (0.21239,1.02937) -- (0.24357,1.05177) -- (0.27445,1.07290) -- (0.30500,1.09271) -- (0.33517,1.11119) -- (0.36494,1.12832) -- (0.39426,1.14408) -- (0.42310,1.15844) -- (0.45143,1.17139) -- (0.47920,1.18291) -- (0.50640,1.19299) -- (0.53297,1.20162) -- (0.55890,1.20878) -- (0.58414,1.21448) -- (0.60867,1.21869) -- (0.63247,1.22141) -- (0.65549,1.22265) -- (0.67771,1.22240) -- (0.69911,1.22066) -- (0.71965,1.21743) -- (0.73932,1.21272) -- (0.75809,1.20653) -- (0.77593,1.19888) -- (0.79283,1.18976) -- (0.80271,1.18345);
   \draw[draw=black!43,line width=.4pt,opacity=.65] (-0.35213,-1.38597) -- (-0.33993,-1.38848) -- (-0.32439,-1.39009) -- (-0.30846,-1.39000) -- (-0.29215,-1.38822) -- (-0.27548,-1.38474) -- (-0.25848,-1.37958) -- (-0.24116,-1.37274) -- (-0.22355,-1.36423) -- (-0.20567,-1.35405) -- (-0.18754,-1.34223) -- (-0.16917,-1.32877) -- (-0.15061,-1.31369) -- (-0.13185,-1.29700) -- (-0.11294,-1.27874) -- (-0.09389,-1.25893) -- (-0.07473,-1.23757) -- (-0.05547,-1.21471) -- (-0.03615,-1.19037) -- (-0.01678,-1.16458) -- (0.00261,-1.13737) -- (0.02199,-1.10878) -- (0.04135,-1.07883) -- (0.06065,-1.04757) -- (0.07989,-1.01503) -- (0.09902,-0.98126) -- (0.11804,-0.94629) -- (0.13691,-0.91017) -- (0.15561,-0.87294) -- (0.17413,-0.83465) -- (0.19243,-0.79533) -- (0.21050,-0.75505) -- (0.22831,-0.71385) -- (0.24585,-0.67179) -- (0.26308,-0.62890) -- (0.28000,-0.58524) -- (0.29657,-0.54088) -- (0.31278,-0.49585) -- (0.32861,-0.45022) -- (0.34404,-0.40404) -- (0.35905,-0.35737) -- (0.37362,-0.31026) -- (0.38774,-0.26278) -- (0.40139,-0.21497) -- (0.41454,-0.16690) -- (0.42720,-0.11863) -- (0.43933,-0.07022) -- (0.45092,-0.02172) -- (0.46197,0.02681) -- (0.47245,0.07530) -- (0.48236,0.12370) -- (0.49168,0.17195) -- (0.50040,0.22000) -- (0.50851,0.26777) -- (0.51601,0.31522) -- (0.52287,0.36228) -- (0.52910,0.40890) -- (0.53468,0.45503) -- (0.53961,0.50060) -- (0.54388,0.54556) -- (0.54749,0.58985) -- (0.55044,0.63343) -- (0.55271,0.67623) -- (0.55431,0.71822) -- (0.55523,0.75932) -- (0.55548,0.79950) -- (0.55505,0.83871) -- (0.55394,0.87689) -- (0.55216,0.91401) -- (0.54971,0.95001) -- (0.54659,0.98486) -- (0.54280,1.01850) -- (0.53835,1.05091) -- (0.53324,1.08203) -- (0.52749,1.11184) -- (0.52109,1.14029) -- (0.51405,1.16735) -- (0.50640,1.19299) -- (0.49812,1.21718) -- (0.48924,1.23988) -- (0.47976,1.26108) -- (0.46969,1.28073) -- (0.45906,1.29883) -- (0.44786,1.31534) -- (0.43612,1.33025) -- (0.42385,1.34354) -- (0.41106,1.35520) -- (0.39777,1.36520) -- (0.38399,1.37354) -- (0.36975,1.38020) -- (0.35506,1.38519) -- (0.35213,1.38597);
   \draw[regional!65,line width=.6pt] (0.22831,-0.71385) -- (0.26518,-0.72296) -- (0.30184,-0.73150) -- (0.33826,-0.73947) -- (0.37441,-0.74685) -- (0.41027,-0.75364) -- (0.44581,-0.75985) -- (0.48100,-0.76545) -- (0.51581,-0.77045) -- (0.55021,-0.77485) -- (0.58418,-0.77864) -- (0.61769,-0.78181) -- (0.65071,-0.78437) -- (0.68322,-0.78631) -- (0.71520,-0.78763) -- (0.74661,-0.78833) -- (0.77743,-0.78842) -- (0.80764,-0.78788) -- (0.83722,-0.78672) -- (0.86614,-0.78495) -- (0.89438,-0.78255) -- (0.92191,-0.77954) -- (0.94872,-0.77592) -- (0.97478,-0.77169) -- (1.00008,-0.76685) -- (1.02458,-0.76140) -- (1.04829,-0.75536) -- (1.07116,-0.74872) -- (1.09320,-0.74150) -- (1.09320,-0.74150) -- (1.10564,-0.71113) -- (1.11738,-0.68030) -- (1.12842,-0.64898) -- (1.13876,-0.61720) -- (1.14839,-0.58493) -- (1.15731,-0.55217) -- (1.16550,-0.51893) -- (1.17296,-0.48519) -- (1.17968,-0.45094) -- (1.18565,-0.41619) -- (1.19084,-0.38091) -- (1.19525,-0.34510) -- (1.19885,-0.30874) -- (1.20163,-0.27182) -- (1.20355,-0.23432) -- (1.20460,-0.19623) -- (1.20473,-0.15752) -- (1.20392,-0.11816) -- (1.20211,-0.07814) -- (1.19928,-0.03742) -- (1.19536,0.00404) -- (1.19030,0.04628) -- (1.18402,0.08934) -- (1.17645,0.13328) -- (1.16749,0.17816) -- (1.15703,0.22405) -- (1.14495,0.27105) -- (1.13108,0.31925) -- (1.13108,0.31925) -- (1.11550,0.31415) -- (1.09933,0.30945) -- (1.08257,0.30518) -- (1.06524,0.30133) -- (1.04735,0.29791) -- (1.02892,0.29491) -- (1.00997,0.29235) -- (0.99050,0.29022) -- (0.97053,0.28853) -- (0.95008,0.28728) -- (0.92917,0.28646) -- (0.90780,0.28608) -- (0.88601,0.28614) -- (0.86380,0.28663) -- (0.84119,0.28757) -- (0.81820,0.28894) -- (0.79485,0.29075) -- (0.77115,0.29300) -- (0.74713,0.29567) -- (0.72280,0.29878) -- (0.69819,0.30232) -- (0.67331,0.30628) -- (0.64818,0.31067) -- (0.62283,0.31547) -- (0.59726,0.32069) -- (0.57151,0.32632) -- (0.54559,0.33236) -- (0.51952,0.33880) -- (0.51952,0.33880) -- (0.51230,0.29127) -- (0.50470,0.24490) -- (0.49677,0.19959) -- (0.48852,0.15527) -- (0.47999,0.11184) -- (0.47118,0.06927) -- (0.46212,0.02748) -- (0.45282,-0.01356) -- (0.44329,-0.05389) -- (0.43355,-0.09354) -- (0.42361,-0.13255) -- (0.41346,-0.17094) -- (0.40312,-0.20874) -- (0.39261,-0.24596) -- (0.38191,-0.28263) -- (0.37104,-0.31876) -- (0.36000,-0.35435) -- (0.34880,-0.38944) -- (0.33743,-0.42402) -- (0.32591,-0.45811) -- (0.31424,-0.49172) -- (0.30241,-0.52485) -- (0.29043,-0.55750) -- (0.27830,-0.58969) -- (0.26602,-0.62142) -- (0.25360,-0.65269) -- (0.24103,-0.68350) -- (0.22831,-0.71385) -- cycle;
   \path[fill=sphereLight,draw=regional!55,line width=.55pt,opacity=.65,rotate around={-131.59641:(-0.71394,-0.80423)}] (-0.71394,-0.80423) ellipse [x radius=1.1864pt,y radius=1.8000pt];
   \path[fill=sphereLight,draw=regional!55,line width=.55pt,opacity=.65,rotate around={-53.19251:(0.65659,-0.87744)}] (0.65659,-0.87744) ellipse [x radius=1.1563pt,y radius=1.8000pt];
   \path[fill=sphereLight,draw=regional!55,line width=.55pt,opacity=.65,rotate around={111.54223:(-0.42885,1.08634)}] (-0.42885,1.08634) ellipse [x radius=1.0386pt,y radius=1.8000pt];
   \path[fill=sphereLight,draw=regional!55,line width=.55pt,opacity=.65,rotate around={-86.82875:(0.07546,-1.36190)}] (0.07546,-1.36190) ellipse [x radius=0.5406pt,y radius=1.8000pt];
   \path[fill=sphereLight,draw=regional!55,line width=.55pt,opacity=.65,rotate around={-128.35916:(-0.86882,-1.09778)}] (-0.86882,-1.09778) ellipse [x radius=0.3668pt,y radius=1.8000pt];
   \path[fill=sphereLight,draw=regional!55,line width=.55pt,opacity=.65,rotate around={74.30188:(0.37915,1.34903)}] (0.37915,1.34903) ellipse [x radius=0.3588pt,y radius=1.8000pt];
   \path[fill=sphereLight,draw=regional!55,line width=.55pt,opacity=.65,rotate around={-164.75374:(-1.36297,-0.37149)}] (-1.36297,-0.37149) ellipse [x radius=0.2792pt,y radius=1.8000pt];
   \path[fill=sphereLight,draw=regional!55,line width=.55pt,opacity=.65,rotate around={85.54760:(0.11059,1.42025)}] (0.11059,1.42025) ellipse [x radius=0.1570pt,y radius=1.8000pt];
   \path[fill=sphereLight,draw=regional!55,line width=.55pt,opacity=.65,rotate around={121.68595:(-0.75025,1.21542)}] (-0.75025,1.21542) ellipse [x radius=0.0870pt,y radius=1.8000pt];
   \path[fill=regional,draw=white,line width=.38pt,rotate around={23.36774:(1.23603,0.53405)}] (1.23603,0.53405) ellipse [x radius=0.6735pt,y radius=2.0000pt];
   \path[fill=regional,draw=white,line width=.38pt,rotate around={-117.66399:(-0.57698,-1.10066)}] (-0.57698,-1.10066) ellipse [x radius=0.9895pt,y radius=2.0000pt];
   \path[fill=regional,draw=white,line width=.38pt,rotate around={-179.97788:(-1.22370,-0.00047)}] (-1.22370,-0.00047) ellipse [x radius=1.0348pt,y radius=2.0000pt];
   \path[fill=regional,draw=white,line width=.38pt,rotate around={94.07678:(-0.07951,1.11549)}] (-0.07951,1.11549) ellipse [x radius=1.2464pt,y radius=2.0000pt];
   \path[fill=regional,draw=white,line width=.38pt,rotate around={67.06560:(0.39216,0.92682)}] (0.39216,0.92682) ellipse [x radius=1.4209pt,y radius=2.0000pt];
   \path[fill=regional,draw=white,line width=.38pt,rotate around={-34.81981:(0.73883,-0.51388)}] (0.73883,-0.51388) ellipse [x radius=1.5542pt,y radius=2.0000pt];
   \path[fill=regional,draw=white,line width=.38pt,rotate around={-95.34989:(-0.03624,-0.38695)}] (-0.03624,-0.38695) ellipse [x radius=1.9247pt,y radius=2.0000pt];
   \node[text=residual,font=\Large,inner sep=0pt] at (-0.33462,0.74039) {$\star$};
  \end{scope}
  \draw[black!45,line width=.5pt] (0,0) circle (1.43);
  \draw[-{Latex[length=1.1mm,width=.8mm]},regional!80,line width=.45pt]
    (.97,.05) -- (1.49,.60);
  \node[text=regional,anchor=west,inner sep=1pt] at (1.52,.62) {$C_j$};

  \draw[regional!70,line width=.45pt] (0.73883,-0.51388) -- (1.49600,-0.36388);
  \node[text=regional,anchor=west,inner sep=1pt] at (1.51300,-0.36388) {$\omega_j$};
  \node[text=residual,anchor=west,inner sep=1pt] at (-0.21462,0.78039) {$\omega_N$};
 \end{scope}
 \node[card,draw=black!25,fill=black!2,minimum width=4.08cm,
       minimum height=.8cm] (sharedlist) at (2.19,1.77) {};
 \foreach \x/\lab in {.65/\omega_1,1.42/\omega_2,2.19/\cdots,2.96/\omega_m} {
   \node[card,draw=regional!55,fill=regional!10,
         minimum width=.68cm,minimum height=.56cm] at (\x,1.77) {$\lab$};
 }
 \node[card,draw=residual!65,fill=residual!12,
       minimum width=.68cm,minimum height=.56cm] at (3.73,1.77) {$\omega_N$};
 \node[text=black!75] at (2.15,1.17) {$N=m+1$};
 \node (sharedalice) at (1.05,.56) {Alice};
 \node (sharedbob) at (3.25,.56) {Bob};
 \draw[arr] (sharedalice.north |- sharedlist.south) -- (sharedalice.north);
 \draw[arr] (sharedbob.north |- sharedlist.south) -- (sharedbob.north);

 \begin{scope}[shift={(4.65,0)}]
  \node[heading] at (0,5.68) {(b) Exact local selection};
  \node at (2.85,5.12) {$f_x=\underline f_x+h_x$};
  \begin{scope}[shift={(.47,2.55)}]
   \foreach \j in {0,...,5} {
    \pgfmathsetmacro{\xa}{.8*\j}
    \pgfmathsetmacro{\xb}{.8*(\j+1)}
    \pgfmathsetmacro{\g}{1.55*(.5+.36*sin(75*(\xa+.4)))}
    \fill[regional!23] (\xa,0) rectangle (\xb,\g);
    \fill[residual!24] (\xa,\g) --
      plot[domain=\xa:\xb,samples=21] (\x,{1.55*(1+.36*sin(75*\x))})
      -- (\xb,\g) -- cycle;
    \draw[regional,line width=.65pt] (\xa,\g) -- (\xb,\g);
    \ifnum\j>0
     \pgfmathsetmacro{\gprev}{1.55*(.5+.36*sin(75*(\xa-.4)))}
     \draw[regional,line width=.65pt] (\xa,\gprev) -- (\xa,\g);
    \fi
    \draw[black!20,densely dotted] (\xa,0) --
      (\xa,{1.55*(1+.36*sin(75*\xa))});
   }
   \draw[black!85,line width=.8pt] plot[domain=0:4.8,samples=101] (\x,{1.55*(1+.36*sin(75*\x))});
   \draw[arr,black!65] (-.08,0) -- (5.02,0);
   \draw[arr,black!65] (0,-.02) -- (0,2.35);
   \node[text=regional] at (1.15,.62) {$\underline f_x$};
   \node[text=residual] at (1.15,1.7) {$h_x$};
   \node[anchor=west] at (4.83,1.55) {$f_x$};
  \end{scope}
  \node[card,draw=residual!65,fill=residual!9,
        minimum width=5.05cm,minimum height=.72cm] (globalchoice) at (2.85,1.85) {};
  \node[text=residual!90!black] at (.92,1.85) {$\omega_N$};
  \draw[residual!35] (1.48,1.49) -- (1.48,2.21);
  \node at (3.4,1.85) {with probability $h_x(\omega_N)$};
  \node[card,draw=regional!65,fill=regional!8,
        minimum width=5.05cm,minimum height=.72cm] (regionalchoice) at (2.85,.63) {};
  \node[text=regional!90!black] at (.92,.63) {$\omega_j$};
  \draw[regional!35] (1.48,.27) -- (1.48,.99);
  \node at (3.4,.63) {with probability $w_j(x)/W_x$};
  \draw[arr] (globalchoice.south) --
    node[midway,right=2pt,inner sep=0pt] {otherwise} (regionalchoice.north);
 \end{scope}

 \begin{scope}[shift={(10.95,0)}]
  \node[heading] at (0,5.68) {(c) Nonsignalling coordination};
  \node[inner sep=0pt] (x) at (1.4,5.32) {$x$};
  \node[inner sep=0pt] (y) at (4.85,5.32) {$y$};
  \node[card,fill=regional!4,minimum width=2.5cm,
        minimum height=1.32cm] (alice) at (1.4,4.26) {};
  \node[card,fill=regional!4,minimum width=2.5cm,
        minimum height=1.32cm] (bob) at (4.85,4.26) {};
  \node[font=\footnotesize\bfseries] at (1.4,4.64) {Alice};
  \node[font=\footnotesize\bfseries] at (4.85,4.64) {Bob};
  \foreach \dx/\lab in {-.81/1,-.27/2,.27/\cdots,.81/N} {
    \node[card,minimum width=.48cm,minimum height=.44cm,
          font=\scriptsize] at (1.4+\dx,4.23) {$\lab$};
  }
  \node[card,draw=regional,line width=1pt,fill=regional!25,
        minimum width=.48cm,minimum height=.44cm,font=\scriptsize]
        at (1.13,4.23) {$2$};
  \node[font=\scriptsize,inner sep=1pt] (signbit) at (1.4,3.81)
    {$(-1)^d=a_x(\omega_I)$};
  \node[card,minimum width=2.22cm,minimum height=.44cm,
        font=\scriptsize] at (4.85,4.23) {$\mathbf z=(z_1,\ldots,z_N)$};
  \draw[arr] (x.south) -- (alice.north);
  \draw[arr] (y.south) -- (bob.north);
  \node[card,fill=black!2,minimum width=5.7cm,
        minimum height=1.35cm] (indexbox) at (3.15,2.44) {};
  \node at (3.15,2.84) {\textbf{Index box} $R_N$};
  \foreach \px/\lab in {1.1/1,2.2/2,4.3/N-1} {
    \node[card,draw=black!45,fill=white,minimum width=.92cm,
          minimum height=.43cm,font=\scriptsize] at (\px,2.24) {$\mathrm{PR}_{\lab}$};
  }
  \node at (3.25,2.24) {$\cdots$};
  \node[circle,draw=black!50,fill=white,line width=.5pt,
        minimum size=.43cm,inner sep=1pt,font=\scriptsize] at (5.45,2.24) {$H$};
  \draw[arr] (alice.south) -- node[midway,left=2pt,inner sep=0pt] {$I$} (alice.south |- indexbox.north);
  \draw[arr] (bob.south) -- node[midway,right=2pt,inner sep=0pt] {$\mathbf z$} (bob.south |- indexbox.north);
  \node[card,minimum width=2.35cm,minimum height=.6cm] (ao)
     at (1.4,.57) {$A=(-1)^{u\oplus d}$};
  \node[card,minimum width=2.35cm,minimum height=.6cm] (bo)
     at (4.85,.57) {$B=(-1)^v$};
  \draw[arr] (alice.south |- indexbox.south) -- node[midway,left=2pt,inner sep=0pt] {$u$} (ao.north);
  \draw[arr] (bob.south |- indexbox.south) -- node[midway,right=2pt,inner sep=0pt] {$v$} (bo.north);
  \node[font=\scriptsize] at (3.15,1.32) {Output parity\\$u\oplus v=z_I$};
  \draw[arr,dashed,rounded corners=2pt] (signbit.west) -- (-.08,3.81)
    -- (-.08,.57) -- (ao.west);
 \end{scope}
\end{tikzpicture}
\caption{Exact simulation of the nonlocal branch (probability $1-c$).
(a) The highlighted front patch and its antipodal back patch together form
$C_j$. Each such pair supplies one independent, area-uniform blue candidate
$\omega_j$. Dark filled and pale outlined markers indicate front and back
samples, respectively. The independent orange
star $\omega_N$ is uniform over the sphere. The grid is illustrative.
(b) Schematic decomposition into a regional lower envelope (blue) and a
residual (orange). Alice selects $\omega_N$ with probability $h_x(\omega_N)$,
otherwise $\omega_j$ with conditional probability $w_j(x)/W_x$.
(c) Alice's index $I$ and Bob's full table $\mathbf z$ enter $R_N$, which
returns individually fair bits $u,v$ with parity $z_I$. Inside $R_N$, rectangles
denote $N-1$ parallel PR calls and the circle a shared fair bit $H$.
The dashed path carries Alice's local sign bit $d$ to her output.}
\label{fig:protocol}
\end{figure*}

\mainheading{\textit{Nonlocal branch}}
We now describe the new protocol for the \textit{nonlocal branch}.
Figure~\ref{fig:protocol} shows its three ingredients: a shared list prepared
before the inputs arrive, an exact local selection, and nonsignalling
coordination of the outputs.

\protocolstep{1}{Preparation}
Before receiving their measurement settings, Alice and Bob use shared
randomness to prepare the same finite list of candidate directions.
They will later use this list differently: Alice selects one candidate,
while Bob evaluates his response on every candidate.

The sampling rule is fixed in advance for the known state. Partition
the sphere into $m$ regions $C_1,\ldots,C_m$, each of positive area.
For each region $C_j$, both parties know its probability mass $\mu(C_j)$
and a fixed reference direction $v_j\in C_j$. Alice uses the references to
calculate selection probabilities. The candidates are resampled independently for each run.

Specifically, the parties draw one direction $\Lambda_j$ uniformly within
each region $C_j$ ($j=1,\ldots,m$), and one additional direction $\Lambda_N$
uniformly over the whole sphere:
\begin{equation}
 \Lambda_j\sim\mu(\cdot\mid C_j)\quad(j\le m),\qquad
 \Lambda_N\sim\mu,\qquad N=m+1.
 \label{eq:candidates}
\end{equation}
Here $\mu(\cdot\mid C_j)=\mu(\cdot\cap C_j)/\mu(C_j)$ is the uniform
probability measure restricted to region $C_j$ and normalized to mass one.
All draws are independent. Write the random list as
$\Omega=(\Lambda_1,\ldots,\Lambda_N)$, its joint distribution as $Q$,
and a realized list as $\omega=(\omega_1,\ldots,\omega_N)$, where
$\omega_j$ is the sampled value of $\Lambda_j$.

They also share an independent fair bit $H$ and have $N-1$ independent
PR boxes available for the nonsignalling box defined in Step~4. The bit is
independent of the list, branch coin, local selection coins, and boxes.
None of these resources depends on the measurement settings.

The partition and reference directions are chosen so that, throughout
each region, every target density stays within $1/2$ of its value at
the reference direction:
\begin{equation}
 \abs{f_x(\lambda)-f_x(v_j)}\le\frac12,
 \qquad \lambda\in C_j,\quad x\in\sphere.
 \label{eq:cellvariation}
\end{equation}
Joint continuity on the compact product $\sphere\times\sphere$ makes
$f_x$ uniformly continuous in $\lambda$, uniformly over $x$, so a
sufficiently fine finite partition satisfies this bound.
This condition ensures that Alice's selection probabilities below are
well defined. An explicit Borel partition, together with its region
masses and sampling formulas, is given in SM Sec.~\ref{sm:grid}.

\protocolstep{2}{Alice's selection}
Upon receiving $x$, Alice selects an index $I\in\{1,\ldots,N\}$ and hence
the candidate $\omega_I$, without generating new candidates or communicating
her choice.

The idea is to split the target density into a piecewise-constant lower
envelope and a residual bounded by one [Fig.~\ref{fig:protocol}(b)].
The regional candidates supply the lower-envelope contribution, while
the global candidate supplies the residual.

To determine the selection probabilities, Alice evaluates $f_x$ at the
fixed reference directions $v_1,\ldots,v_m$ and at the global candidate
$\omega_N$. From these values, she defines
\begin{align}
 \gamma_j(x)&=\pos{f_x(v_j)-\tfrac12},\nonumber\\
 \underline f_x(\lambda)&=\sum_{j=1}^m\gamma_j(x)\ind\{\lambda\in C_j\},\nonumber\\
 h_x(\lambda)&=f_x(\lambda)-\underline f_x(\lambda).
 \label{eq:lowerresidual}
\end{align}
Here $\gamma_j(x)$ is the nonnegative level assigned to region $C_j$.
Equation~\eqref{eq:cellvariation} ensures that
$\gamma_j(x)\le f_x(\lambda)$ for $\lambda\in C_j$.
Thus the piecewise-constant function $\underline f_x$ is a lower envelope:
$0\le\underline f_x(\lambda)\le f_x(\lambda)$ for every $\lambda$.
The residual $h_x$ is the difference between the target density $f_x$ and
this lower envelope $\underline f_x$. If $\omega_N$ lies in region $C_k$,
she computes $h_x(\omega_N)=f_x(\omega_N)-\gamma_k(x)$. This is the probability with
which she selects the global candidate $\omega_N$. If $\omega_N$ is not selected, she computes the unnormalized weights
\begin{equation}
 w_j(x)=\mu(C_j)\gamma_j(x),\qquad
 W_x=\sum_{j=1}^m w_j(x).
 \label{eq:cellweights}
\end{equation}
The partition condition~\eqref{eq:cellvariation} ensures that
$0\le h_x\le1$ and $W_x>0$.

Alice now uses a local random coin to select $I=N$ with probability
$h_x(\omega_N)$. Otherwise, she draws $j\in\{1,\ldots,m\}$ with
probability $w_j(x)/W_x$ and sets $I=j$. Equivalently, her index has
conditional probabilities
\begin{align}
 p(N\mid x,\omega)&=h_x(\omega_N),\nonumber\\
 p(j\mid x,\omega)&=
 [1-h_x(\omega_N)]\frac{w_j(x)}{W_x},\quad j\le m.
 \label{eq:selector}
\end{align}
Every run selects exactly one candidate and no run is restarted or discarded.

The distribution of the selected vector is obtained by averaging over
Alice's index choices and the shared random list. Selecting the global
candidate $\omega_N$ contributes density $h_x$ relative to $\mu$.
For each regional candidate $\omega_j$, Eq.~\eqref{eq:selector} gives
$\Prob(I=j\mid x)=w_j(x)$, because
$\int[1-h_x(\lambda)]\,\mu(\dd\lambda)=W_x$.
The decision to select $j$ is independent of the regional candidate
$\Lambda_j$, which is uniformly distributed in $C_j$. Thus the regional candidates
$\omega_1,\ldots,\omega_{N-1}$ together contribute density
$\sum_{j=1}^{N-1}w_j(x)\ind\{\lambda\in C_j\}/\mu(C_j)
=\underline f_x(\lambda)$.
Adding the global and regional contributions gives density
$h_x+\underline f_x=f_x$, so for every $x$,
\begin{equation}
 \Omega\sim Q,\quad I\sim p(\cdot\mid x,\Omega)
 \quad\Longrightarrow\quad \Lambda_I\sim\nu_x.
 \label{eq:samplinggoal}
\end{equation}
Conditioned on a particular list $\omega$, the selected vector
can only be one of $\omega_1,\ldots,\omega_N$, whereas averaging this conditional distribution over the shared list reproduces $\nu_x$ exactly. The general finite-selection
result and its proof are given in SM Sec.~\ref{sm:selection}.

Finally, Alice encodes her sign response as
$d=\ind\{(Tx)\cdot\omega_I<0\}$, so that $a_x(\omega_I)=(-1)^d$.
She retains $I$ as her input to the nonsignalling box defined in Step~4,
and $d$ for the final output processing.

\protocolstep{3}{Bob's table}
Upon receiving $y$, Bob computes the sign of $y\cdot\omega_j$ for
every vector in the shared list. He records these signs as bits
$z_j=\ind\{y\cdot\omega_j<0\}$ for $j=1,\ldots,N$, so that
$b_y(\omega_j)=(-1)^{z_j}$. Bob forms the entire table
$z=(z_1,\ldots,z_N)\in\{0,1\}^N$ because he does not know Alice's index.
Using only $y$ and the shared list, he can work in parallel with her selection.

\protocolstep{4}{Index-box call and outputs}
Alice inputs $I$ and Bob his full table $z$ into the index box $R_N$,
which returns a bit $u$ to Alice and a bit $v$ to Bob according to
\begin{equation}
 R_N(u,v\mid i,z)=\frac12\ind\{u\oplus v=z_i\},
 \label{eq:indexbox}
\end{equation}
where $i\in\{1,\ldots,N\}$, $z\in\{0,1\}^N$, and $u,v\in\{0,1\}$.
Each returned bit is individually fair, while their parity equals the
table entry selected by Alice. Neither party receives the other
party's input or output. This nonsignalling random-access
correlation~\cite{Chaturvedi2017} is implemented with parallel PR-box
calls as described in the next section.

To produce her final output, Alice combines $u$ with her previously
computed sign bit $d$ using XOR, then converts the resulting bit to a
sign. Bob converts his bit $v$ directly to a sign:
\begin{equation}
 A=(-1)^{u\oplus d},\qquad B=(-1)^v.
 \label{eq:finaloutputs}
\end{equation}
Both parties then return their outputs, completing the run without
communication. The individually fair box outputs ensure uniform
marginals, while their parity gives
$AB=(-1)^{d+z_I}=a_x(\omega_I)b_y(\omega_I)$.
Averaging over the selected direction therefore recovers $C_N(x,y)$
and hence the full distribution $P_N$. The distributional verification
is given in SM Sec.~\ref{sm:boxcorrectness}.

\mainheading{\textit{PR implementation and budget}}
The box $R_N$ has input alphabet sizes $N$ and $2^N$, with binary outputs.
It can be implemented by $N-1$ independent PR boxes. For each $j<N$, the
parties supply
\begin{equation}
 a_j=\ind\{I=j\},\qquad b_j=z_j\oplus z_N.
 \label{eq:prinputs}
\end{equation}
From the respective outputs $u_j,v_j$ and the shared bit $H$, they form
\begin{equation}
 u=H\oplus\bigoplus_{j<N}u_j,\qquad
 v=H\oplus z_N\oplus\bigoplus_{j<N}v_j.
 \label{eq:proutputs}
\end{equation}
All box inputs are fixed by local preprocessing, so the calls are parallel.
This wiring specializes standard distributed Boolean-function
constructions~\cite{vanDam2013,Kaplan2011}. Its correctness and optimal cost
for the full input domain of $R_N$ are proved in SM Sec.~\ref{sm:index}.

The partition required in Step~1 also fixes the PR budget: $m$ regions give
$N=m+1$ candidates and hence $K=N-1=m$ calls. The construction in
SM Sec.~\ref{sm:grid} satisfies Eq.~\eqref{eq:cellvariation} by pairing
northern angular patches with their antipodes. A grid with angular spacing
$\pi/(2n_\theta)$ suffices when
\begin{equation}
 n_\theta=\left\lceil\sqrt2\pi\cot^2\theta\right\rceil,
 \qquad m=4n_\theta^2,\qquad N=m+1.
 \label{eq:gridbudget}
\end{equation}
Its cell masses and exact sampling formulas are specified in
SM Sec.~\ref{sm:grid}. The selection probabilities are then computed from
these known cell masses, pointwise evaluations of $f_x$, and finite sums.
The resulting $K=4n_\theta^2$ gives
the budget in Theorem~\ref{thm:main}. The finite-box verification and resource
estimates complete its proof in SM Sec.~\ref{sm:mainproof}.

\mainheading{\textit{Mixed states and binary POVMs}}
The finite-resource simulation extends to mixed states
and local binary-outcome measurements using only
shared randomness and local postprocessing.

\begin{corollary}[Mixed states and binary POVMs]
\label{cor:mixed}
For every fixed two-qubit state $\rho$, there is a finite,
state-dependent number of parallel PR-box calls sufficient
to reproduce exactly the statistics of all pairs of local
two-outcome POVMs, without communication or postselection.
\end{corollary}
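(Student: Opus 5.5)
The plan is to reduce Corollary~\ref{cor:mixed} to Theorem~\ref{thm:main}, together with the endpoint cases $K(0)=0$ and $K(\pi/4)=1$, using only two ingredients: shared randomness that is independent of the inputs, and local classical postprocessing. Both steps rely on one observation. The target statistics $\Tr[\rho(E_\alpha\otimes F_\beta)]$ are affine in $\rho$ and affine in each local effect separately. Mixing protocols with input-independent shared coins, or mixing a party's output rules with local coins, therefore reproduces the corresponding mixture of statistics.

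\emph{Step 1 (mixed states).} Write the spectral decomposition $\rho=\sum_{k=1}^{r}q_k\ket{\phi_k}\bra{\phi_k}$ with $r\le4$.
\begin{itemize}
\item For each $k$, choose local unitaries $U_k\otimes V_k$ and an angle $\theta_k\in[0,\pi/4]$ such that $\ket{\phi_k}=(U_k\otimes V_k)\ket{\psi_{\theta_k}}$. This is possible by the Schmidt decomposition, with a local phase absorbed into the basis.
\item A projective measurement along $x$ on $\ket{\phi_k}$ has the same statistics as one along $R_k^{-1}x$ on $\ket{\psi_{\theta_k}}$. Here $R_k\in SO(3)$ is the rotation induced by $U_k$ on the Bloch sphere. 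The same holds for Bob with $V_k$.
\item The parties use shared randomness to draw $k$ with probability $q_k$. They then run the protocol of Theorem~\ref{thm:main} for $\theta_k$ on the rotated inputs. If $\theta_k\in\{0,\pi/4\}$, they instead run the trivial local protocol or the one-box protocol of Ref.~\cite{Cerf2005}.
\end{itemize}
Averaging over $k$ gives exactly $\Tr[\rho(\Pi_{\alpha|x}\otimes\Pi_{\beta|y})]$.

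The number of PR boxes then depends on $k$. To obtain a fixed worst-case budget, the parties always provision $K_\rho=\max_k K(\theta_k)$ boxes and leave unused boxes idle. An idle box receives input $0$ on both sides and its output is discarded. This preserves nonsignalling and keeps all calls parallel. The resulting budget $K_\rho$ is finite because there are at most four components.

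\emph{Step 2 (binary POVMs).} Any two-outcome qubit effect $0\le E\le\id_2$ has eigenvalues $1\ge e_1\ge e_2\ge0$ with spectral projector $\Pi_{+|n}$ onto the $e_1$-eigenvector. It can therefore be written as
\begin{equation}
E=(e_1-e_2)\,\Pi_{+|n}+e_2\,\id_2+(1-e_1)\cdot 0 .
\end{equation}
The complementary effect is $\id_2-E$. Given her POVM, Alice locally samples one of three options, with probabilities $e_1-e_2$, $e_2$ and $1-e_1$:
\begin{itemize}
\item feed direction $n$ into the projective protocol of Step~1 and output its result;
\item output $+1$;
\item output $-1$.
\end{itemize}
In the last two cases she still makes her box calls with an arbitrary fixed input and ignores the outputs, so the number of calls is unchanged. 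Bob does the same independently for his POVM.

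Conditioned on both parties choosing the projective option, the joint statistics are those of Step~1. In every other case, at least one output is a deterministic local function. By bilinearity, the joint distribution equals $\Tr[\rho(E_\alpha\otimes F_\beta)]$. The nonsignalling property of the overall behaviour is automatic, because every component is a valid local or PR-box protocol.

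\emph{Main obstacle.} The step that needs the most care is not the convexity argument but showing that the mixing is legitimate in the model of Theorem~\ref{thm:main}. There are three points to check.
\begin{itemize}
\item The component coin $k$ must be drawn before the inputs arrive.
\item The choice among the three POVM options must use only local randomness and the local input.
\item The number of box calls must be independent of the inputs and of the branch, which is handled by padding with idle boxes.
\end{itemize}
One should also verify that the rotation $x\mapsto R_k^{-1}x$ and the map $E\mapsto(e_1,e_2,n)$ are Borel. The rotation is continuous. The map $E\mapsto(e_1,e_2,n)$ is Borel on the set where $e_1>e_2$. When $e_1=e_2$, the projective option has probability zero, so any measurable choice of $n$ works.
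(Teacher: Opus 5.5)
Your argument is correct. Step~1 coincides with the paper's: a spectral decomposition with at most four terms, local Schmidt-basis changes absorbed into the settings, an input-independent shared coin selecting the component, and a supply of $\max_k K(\theta_k)$ boxes with unused boxes ignored.

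Step~2 uses a different mechanism. The paper always runs the projective simulator along $x$ and then postprocesses the outcome $a$, reporting $+$ with probability $e_a$. The POVM statistics therefore follow from linearity alone. You instead let a local coin preselect among the projective measurement along $n$ and the two trivial measurements, via $E=(e_1-e_2)\Pi_{+|n}+e_2\id_2$. The induced kernel is the same: given outcome $\pm$ you report $+$ with probability $e_1$, resp.\ $e_2$.

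Your version does rely on one fact you leave implicit. In a mixed case such as Alice trivial and Bob projective, the target is $\Tr[\rho(\id_2\otimes F_\beta)]=\Tr[\rho_B F_\beta]$, with $\rho_B$ Bob's reduced state. You therefore need Bob's output distribution to be unchanged when Alice feeds dummy box inputs and skips her selection. Your remark that every component is a valid PR-box protocol gestures at this, but the needed statement concerns Bob's marginal specifically. It holds because Bob's output depends only on $y$, the shared seed, and his own PR outputs, which are fair and independent of Alice's box inputs. His marginal thus equals that of the exact protocol, which is the quantum marginal.

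You can remove this step by letting Alice run the simulator along $n$ in all three branches and merely override her output. Your construction then becomes exactly the paper's postprocessing. In exchange, your form makes explicit that every binary qubit POVM is a convex mixture of one projective and two trivial measurements.
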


To see this, choose a finite pure-state decomposition
$\rho=\sum_{r=1}^{R}p_r\ket{\psi_r}\bra{\psi_r}$,
with $R\le4$. The parties use shared randomness to
select $r$ with probability $p_r$, independently of
the measurements, and run the corresponding pure-state
simulator, absorbing the required local changes of Schmidt
basis into their measurement settings.
If $\theta_r\in[0,\pi/4]$ denotes the Schmidt parameter
of $\ket{\psi_r}$, at most
$K_\rho=\max_{r:p_r>0}K(\theta_r)$
PR boxes are needed in any run.

Every binary qubit POVM is a stochastic postprocessing
of a projective measurement. Indeed, for Alice's POVM
$E=\{E_+,E_-\}$, write
$E_+=e_+\Pi_{+|x}+e_-\Pi_{-|x}$,
with $0\le e_\pm\le1$ and $E_-=\id_2-E_+$.
Alice first simulates the projective measurement
along $x$ and, conditional on its outcome
$a\in\{+1,-1\}$, reports $+$ with probability
$e_a$ and $-$ otherwise. If $E_+$ is proportional
to the identity, any $x$ may be used.
Bob applies the analogous procedure to
$F=\{F_+,F_-\}$. By linearity, their final
outputs satisfy
\begin{equation}
 \Prob(A=\alpha,B=\beta\mid E,F)
 =\Tr\!\left[\rho(E_\alpha\otimes F_\beta)\right],
 \label{eq:povmtarget}
\end{equation}
for all local binary POVMs. Neither shared mixing
nor local postprocessing requires additional PR boxes;
see SM Sec.~\ref{sm:extensions}.

\mainheading{\textit{Discussion}}
Our result resolves the finite-alphabet simulation
question for two-qubit correlations: for each fixed
two-qubit state, a continuum of local measurement
settings can be simulated exactly with finitely many
uses of binary-input, binary-output PR boxes.
The essential distinction is between continuous
data handled by shared randomness and local processing,
and the finite nonlocal coordination needed to
reproduce the correlations. The finite-list selector
makes this separation explicit: it converts
input-dependent sampling into a finite-index task
without approximating the target distribution or
communicating the selected index.

The construction provides explicit, though not
necessarily optimal, resource guarantees.
For $0<\theta<\pi/4$, the worst-case cost is
$K(\theta)=O(\theta^{-4})$, while the average number
of PR-box calls in our protocol is
$(1-\cos2\theta)K(\theta)=O(\theta^{-2})$
as $\theta\to0^+$. Their divergence reflects this
construction; it does not imply an intrinsic divergence
of the resources required near a product state.
Moreover, for any fixed $0<\theta_0<\pi/4$,
the same finite partition, candidate-list distribution,
and supply of $K(\theta_0)$ PR boxes suffice for
every $\theta\in[\theta_0,\pi/4)$, with
state-dependent local processing
(SM Sec.~\ref{sm:uniform}).
Whether a single finite budget suffices uniformly
over all pure two-qubit states, and what the optimal
state-dependent PR cost is, remain open.
In particular, maximizing the local weight does
not by itself establish optimality of the PR-box cost.

The finite-selection principle is not restricted
to the particular qubit density used here.
Theorem~\ref{thm:selection} in the SM applies
to jointly continuous families of normalized
densities on compact parameter and sample spaces,
relative to a full-support reference measure.
Combined with Borel-measurable local sign responses,
it yields finite PR-box realizations of unbiased
binary correlations admitting such a
distributed-sampling representation.
Identifying further quantum correlation families
with this structure, and extending the approach
to general multi-outcome POVMs, are natural
directions; the latter is not covered by the
binary postprocessing used above~\cite{Oszmaniec2017}.
Finite-precision implementations introduce a
separate challenge: both sampling errors and
sign evaluations near zero must be controlled.
The stability bound in SM Sec.~\ref{sm:stability}
addresses distributional sampling errors assuming
exact sign responses and index-box operations,
while a complete precision--accuracy--resource
tradeoff remains open.
These results motivate a broader resource question:
which continuous quantum correlation families admit
exact finite-alphabet nonsignalling realizations,
and at what cost?

\mainheading{\textit{Acknowledgments}}
This work was supported by the National Natural Science Foundation of China (Grant. No.~92576114, 12447107), the Guangdong Provincial Quantum Science Strategic Initiative (Grant No.~GDZX2403008, GDZX2503001), and by the Guangdong Basic and Applied Basic Research Foundation (Grant Nos.~2026A1515030035 and 2025A1515110223).

\mainheading{\textit{Use of AI-assisted tools}}
Frontier Large Language Models were used during the development of this work. The authors rigorously verified the statement and proof, and prepared the manuscript. The authors take full responsibility for the final text and all mathematical claims contained therein.

\bibliography{ref}

% Supplemental Material: a fresh page, single-column layout, S-numbering.
\clearpage
\onecolumngrid
\phantomsection
\pdfbookmark[0]{Supplemental Material}{supplement}
\begin{center}
 {\large\bfseries Supplemental Material}\\[5pt]
 {\bfseries Finite Nonsignalling Resources Suffice to Simulate Entangled Qubits}
\end{center}
\medskip
The supplement follows the dependencies of the protocol.
Section~\ref{sm:quantum} verifies the known local--nonlocal representation.
Section~\ref{sm:selection} states and proves the general finite-selection
theorem, and Sec.~\ref{sm:grid} constructs the explicit qubit partition and
candidate distributions. Section~\ref{sm:index} verifies the index box, establishes
the resource bounds, and completes the proof of the main theorem; it also
gives the optimal PR cost of the full index box.
Section~\ref{sm:extensions} proves the mixed-state and binary-measurement
extensions. Section~\ref{sm:stability} treats perturbations of the distribution
of the selected vector separately from the exact construction. Reference numbering matches the main text. All measures and kernels are defined on Borel sigma-algebras.

\appendix
\makeatletter
\@removefromreset{equation}{section}
\renewcommand{\p@subsection}{}
\makeatother
\setcounter{secnumdepth}{2}
\setcounter{section}{0}
\renewcommand{\thesection}{S\arabic{section}}
\renewcommand{\thesubsection}{\thesection.\arabic{subsection}}
\setcounter{equation}{0}
\renewcommand{\theequation}{S\arabic{equation}}
\renewcommand{\theHequation}{supp.\arabic{equation}}
\renewcommand{\theHsection}{supp.\arabic{section}}
\renewcommand{\theHsubsection}{supp.\arabic{section}.\arabic{subsection}}

\section{Qubit representation and regularity}
\label{sm:quantum}

We first verify the decomposition used in the main text, including the
clipping form of the local response. Its source is Ref.~\cite{Renner2023},
Protocol 0 and Appendix E, Protocol 6. Throughout this section $0<\theta<\pi/4$,
$c=\cos(2\theta)$, $s=\sin(2\theta)$, $T=\diag(s,-s,1)$, $t_x=Tx$, and
$e_z=(0,0,1)$. Surface area on $\sphere$ is denoted by $\dd S$.

\subsection{Spherical identities and the local response}

For $w\in\R^3$ and a unit vector $y$,
\begin{equation}
 \int_{\sphere}\abs{w\cdot\lambda}\,\dd S=2\pi\norm{w},
 \qquad
 \int_{\sphere}\lambda\sgn(y\cdot\lambda)\,\dd S=2\pi y.
 \label{eq:sm-sphereidentities}
\end{equation}
The first identity follows by integration in spherical coordinates with
the polar axis along $w$.
The second integral is parallel to $y$ by rotational symmetry, and its
component along $y$ is $\int\abs{y\cdot\lambda}\,\dd S=2\pi$. Also,
\begin{equation}
 \norm{t_x\pm ce_z}^2
 =s^2(1-x_z^2)+(x_z\pm c)^2=(1\pm cx_z)^2,
 \qquad
 \norm{t_x\pm ce_z}=1\pm cx_z.
 \label{eq:sm-schmidtidentity}
\end{equation}
The final expressions are positive because $c<1$.

For each $\lambda$, define $u_x=t_x\cdot\lambda$ and $k=c\abs{\lambda_z}$.
Decompose the scalar response as
\begin{equation}
 u_x=q_x+r_x,\qquad
 q_x=\clip(u_x,-k,k),\qquad
 r_x=\sgn(u_x)\pos{\abs{u_x}-k}.
 \label{eq:sm-clipping}
\end{equation}
On the local branch, the parties share a point on
$H=\{\lambda\in\sphere:\lambda_z\ge0\}$ with area density
$\lambda_z/\pi$. This density is normalized because
$\int_H\lambda_z\,\dd S=\pi$. Bob outputs $b_y(\lambda)$, while Alice's
conditional mean is
\begin{equation}
 m_A(x,\lambda)
 =\clip\!\left(\frac{t_x\cdot\lambda}{c\lambda_z},-1,1\right)
 =\frac{q_x(\lambda)}{c\lambda_z}
 \quad (\lambda_z>0).
 \label{eq:sm-localmean}
\end{equation}
At $\lambda_z=0$ the mean may be fixed arbitrarily, since the equator has
zero area. All response functions are Borel.

To identify this rule with the construction in Ref.~\cite{Renner2023}, set its Schmidt probability
$p=\cos^2\theta$, so $2p-1=c$. Alice's quantum marginal and Bob's conditional
Bloch vector obey
\begin{equation}
 p_\alpha=\frac{1+\alpha cx_z}{2},\qquad
 p_\alpha v_\alpha=\frac{ce_z+\alpha t_x}{2}.
 \label{eq:sm-sourcevectors}
\end{equation}
For $\lambda_z>0$, Protocol 0 of that reference therefore gives
\begin{equation}
 p_A(\alpha\mid x,\lambda)
 =\frac{[c\lambda_z+\alpha u_x]_+}
 {[c\lambda_z+u_x]_++[c\lambda_z-u_x]_+}
 =\frac{1+\alpha m_A(x,\lambda)}2.
 \label{eq:sm-sourceresponse}
\end{equation}
Its local shared-vector density is $[\lambda_z]_+/\pi$, as used here.

Antipodal symmetry and~\eqref{eq:sm-sphereidentities} give
\begin{equation}
 \E B_L=\frac1\pi\int_H\lambda_z b_y(\lambda)\,\dd S
 =\frac1{2\pi}e_z\cdot\int_{\sphere}\lambda b_y(\lambda)\,\dd S=y_z.
 \label{eq:sm-localbobmean}
\end{equation}
Using
$\clip(u,-k,k)=(\abs{u+k}-\abs{u-k})/2$, we obtain
\begin{equation}
 c\E A_L=\frac1\pi\int_Hq_x\,\dd S
 =\frac12\left(\norm{t_x+ce_z}-\norm{t_x-ce_z}\right)=cx_z.
 \label{eq:sm-localfirstmoment}
\end{equation}
In this step, each absolute-value integrand is antipodally even, so its
hemisphere integral is half its full-sphere integral. Thus $\E A_L=x_z$. The product $q_xb_y$ is antipodally even outside
null equators, so the local correlation $C_L(x,y)=\E(A_LB_L\mid x,y)$
satisfies
\begin{equation}
 cC_L(x,y)=\frac1{2\pi}\int_{\sphere}q_x(\lambda)b_y(\lambda)\,\dd S.
 \label{eq:sm-localcorrelation}
\end{equation}

\subsection{Normalization of the remainder and the Born distribution}

For $k\ge0$,
$\pos{\abs{u}-k}=(\abs{u+k}+\abs{u-k}-2k)/2$. Applying this identity and
\eqref{eq:sm-schmidtidentity} yields
\begin{equation}
 \int_{\sphere}\abs{r_x(\lambda)}\,\dd S
 =\pi\left(\norm{t_x+ce_z}+\norm{t_x-ce_z}-2c\right)
 =2\pi(1-c).
 \label{eq:sm-normalization}
\end{equation}
Replacing $c\abs{\lambda_z}$ by $c\lambda_z$ within the sum of the two
absolute values only exchanges its terms. The area density
$\abs{r_x}/[2\pi(1-c)]$ is therefore normalized. Relative to
$\mu=\dd S/(4\pi)$, it is precisely~\eqref{eq:qubitdensity}.

The density $f_x$ is even in $\lambda$, and
\begin{equation}
 a_x(\lambda)f_x(\lambda)=\frac{2r_x(\lambda)}{1-c}.
 \label{eq:sm-signeddensity}
\end{equation}
Both sign responses reverse under $\lambda\mapsto-\lambda$ except on great
circles. Since $T$ is invertible in the stated parameter range, these circles
have zero area and zero $f_x\mu$ probability. The two sign responses thus
have zero mean under $\nu_x=f_x\mu$. Their correlation defines the unbiased
behavior $P_N$ in~\eqref{eq:nonlocalbehavior}.

Combining~\eqref{eq:sm-localcorrelation} and~\eqref{eq:sm-signeddensity},
\begin{align}
 cC_L(x,y)+(1-c)C_N(x,y)
 &=\frac1{2\pi}\int_{\sphere}(q_x+r_x)b_y\,\dd S\nonumber\\
 &=\frac1{2\pi}t_x\cdot
       \int_{\sphere}\lambda b_y(\lambda)\,\dd S
 =x^{\mathsf T}Ty.
 \label{eq:sm-fullcorrelation}
\end{align}
The mixture has means $cx_z,cy_z$. For arbitrary signs $A,B$,
\begin{equation}
 \Prob(A=\alpha,B=\beta)
 =\frac14\left[1+\alpha\E A+\beta\E B+\alpha\beta\E(AB)\right],
 \label{eq:sm-binarymoments}
\end{equation}
as follows by multiplying
$\ind\{A=\alpha\}=(1+\alpha A)/2$ and the corresponding expression for $B$.
The three moments therefore prove
$P_\theta=cP_L+(1-c)P_N$ with the full Born probabilities, including their
biased marginals.

The weight $c$ is the maximum local mixture weight by
Ref.~\cite{Portmann2012}. Its value is an input to the present work, not a new
optimality assertion. Retaining this decomposition maximizes the probability
of an input-independent local branch, but does not minimize the PR count
within the remaining branch or the mean PR cost over all protocols.

\subsection{Uniform Lipschitz estimate}

The absolute-value and positive-part functions are 1-Lipschitz. Hence
\begin{align}
 \abs{f_x(\lambda)-f_x(\lambda')}
 &\le\frac2{1-c}\left(
       \abs{(Tx)\cdot(\lambda-\lambda')}
       +c\abs{\lambda_z-\lambda'_z}\right)\nonumber\\
 &\le\frac{2(1+c)}{1-c}\norm{\lambda-\lambda'}
 =2\cot^2\theta\norm{\lambda-\lambda'}.
 \label{eq:sm-Lipschitz}
\end{align}
We used $\norm{Tx}\le1$ and
$1+c=2\cos^2\theta$, $1-c=2\sin^2\theta$. Joint continuity follows directly
from~\eqref{eq:qubitdensity}. Normalization, continuity, and full support of
$\mu$ verify the hypotheses of Theorem~\ref{thm:selection}. Evenness permits
paired antipodal sampling cells, as used in Sec.~\ref{sm:grid}.

\section{Exact selection from a fixed finite list}
\label{sm:selection}

We state the general sampling result underlying~\eqref{eq:samplinggoal},
then prove it using a finite-mixture identity. Here $X$ and $Z$ need not be
spheres. Full support means that every nonempty open subset of $Z$ has
positive $\mu$-mass.

\begin{smtheorem}[Uniform finite selection]
\label{thm:selection}
Let $X,Z$ be nonempty compact metric spaces, $\mu$ a full-support Borel
probability on $Z$, and $f:X\times Z\to[0,\infty)$ continuous with
$\int_Z f_x\,\dd\mu=1$ for every $x$. There exists a finite positive-mass
Borel partition $C_1,\ldots,C_m$ with representatives $v_j\in C_j$ such that
\begin{equation}
 |f_x(z)-f_x(v_j)|\le\tfrac12,
 \qquad z\in C_j,\quad x\in X.
 \label{eq:sm-cellvariation}
\end{equation}
For any such partition, define $\gamma_j,\underline f_x,h_x,w_j,W_x$ by
\eqref{eq:lowerresidual}--\eqref{eq:cellweights}, with $z\in Z$ in place of
$\lambda\in\sphere$. Then $0\le h_x\le1$ and $W_x>0$. Independently generate
$\Lambda_j\sim\mu(\cdot\mid C_j)$ for $j\le m$ and
$\Lambda_N\sim\mu$, where $N=m+1$. The probabilities~\eqref{eq:selector}
define jointly Borel kernels satisfying
\begin{equation}
 \Omega\sim Q,\quad I\sim p(\cdot\mid x,\Omega)
 \quad\Longrightarrow\quad \Lambda_I\sim f_x\mu
 \quad\text{for every }x,
 \label{eq:selectiontheorem}
\end{equation}
where $Q$ is the joint distribution of the independent candidates, given
by $Q=[\bigotimes_{j=1}^m\mu(\cdot\mid C_j)]\otimes\mu$.
Given the partition, its
representatives and masses, the selector uses $m+1$ density evaluations,
one random decision about whether to use the global candidate, and at most
one random choice among the regions.
\end{smtheorem}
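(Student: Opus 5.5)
The proof has four parts: existence of the partition, the bounds $0\le h_x\le1$ and $W_x>0$, the mixture identity for the law of $\Lambda_I$, and measurability.

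\emph{Existence of the partition.} Since $X\times Z$ is compact, $f$ is uniformly continuous. Hence there is $\delta>0$ with $|f_x(z)-f_x(z')|\le\tfrac12$ whenever $d(z,z')<\delta$, uniformly in $x$. Cover $Z$ by finitely many open balls $B_1,\dots,B_M$ of radius $\delta/2$ and disjointify them: $D_k=B_k\setminus\bigcup_{l<k}B_l$. This gives Borel sets of diameter less than $\delta$, but some may have zero mass. To fix this, discard the empty cells and merge each nonempty null cell into a positive-mass cell. If the merge is done naively, the diameter can double, so I would start with radius $\delta/4$ to keep a margin.

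A cleaner route is to note that the union $U$ of all null cells is $\mu$-null. Full support makes $\operatorname{supp}\mu=Z$, so every point of $U$ lies in the closure of the positive-mass cells. Each point of a null cell is therefore within $\delta/4$ of some positive-mass cell, and I assign it to the first such cell in the ordering. This keeps diameters below $\delta$, and the reassignment sets are Borel. Full support also guarantees at least one positive-mass cell. For representatives, choose any $v_j\in C_j$.

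\emph{Bounds on $h_x$ and $W_x$.} For $z\in C_j$ we have $f_x(z)\ge f_x(v_j)-\tfrac12$ and $f_x(z)\ge0$, so $f_x(z)\ge\gamma_j(x)$. This gives $h_x\ge0$. Also $f_x(z)\le f_x(v_j)+\tfrac12\le\gamma_j(x)+1$, so $h_x\le1$. Because $\mu$ is a probability measure and $f_x$ is normalized,
\[
\int(1-h_x)\,\dd\mu=1-1+\int\underline f_x\,\dd\mu=W_x .
\]
The main obstacle is $W_x>0$, since on a cell where $\gamma_j=0$ we only know $f_x\le1$. The plan is to argue by contradiction. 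If $W_x=0$, then $\gamma_j(x)=0$ for every $j$ with $\mu(C_j)>0$, hence for all $j$. Then $h_x=f_x$, and the displayed identity forces $\int(1-f_x)\,\dd\mu=0$. Since $0\le f_x\le1$, we get $f_x=1$ $\mu$-a.e., and by continuity and full support $f_x\equiv1$.

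This is not automatically excluded, so the proof must handle it. I would try two things. The first is to redefine the selector on $\{W_x=0\}$, where $h_x\equiv1$ and the global candidate is always chosen, with the regional choice never invoked, so the result holds trivially. I suspect the intended statement either tacitly excludes this or reads $W_x>0$ as vacuous where irrelevant. The second is to apply a fixed deterministic default for $j$ when $W_x=0$, which preserves the convention $p(N)=1$. For the qubit density the case does not occur, because $\sup f_x>1$ in view of the factor $2/(1-c)$ with $c\ge0$, and one could check this directly.

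\emph{Mixture identity.} For a Borel set $B$, condition on $\Omega$. The contribution of $I=N$ is $\E[h_x(\Lambda_N)\ind_B(\Lambda_N)]=\int_B h_x\,\dd\mu$. For $j\le m$, the event $I=j$ has probability $\E[1-h_x(\Lambda_N)]\,w_j/W_x=w_j$. The selection coin for $j$ depends only on $\Lambda_N$ and local randomness, so it is independent of $\Lambda_j$. The contribution is therefore
\[
w_j\,\mu(B\cap C_j)/\mu(C_j)=\int_B\gamma_j\ind_{C_j}\,\dd\mu .
\]
Summing over $j$ gives $\int_B(h_x+\underline f_x)\,\dd\mu=\int_B f_x\,\dd\mu$.

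\emph{Measurability and operation count.} Joint Borel measurability of the kernels follows because $\gamma_j$ is continuous in $x$, $h_x(z)$ is Borel in $(x,z)$, and $W_x$ is continuous in $x$. The operation count is read off directly from \eqref{eq:selector}.
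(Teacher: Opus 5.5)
\textbf{Gap: the claim $W_x>0$ is not proved.} This is an explicit claim of the theorem, and your proposal stops one step short of it. You correctly show that $W_x=0$ forces $\gamma_j(x)=0$ for every $j$. From this you get $0\le f_x\le1$, and then $f_x\equiv1$ by normalization, continuity and full support.

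You then call this case ``not automatically excluded'' and offer to redefine the selector or reinterpret the statement. But the case is impossible. By definition, $\gamma_j(x)=[f_x(v_j)-\tfrac12]_+=0$ means $f_x(v_j)\le\tfrac12$ at every representative. That contradicts $f_x\equiv1$; just evaluate at $v_1$, which exists because the partition is nonempty. This one-line contradiction is exactly how the paper closes the argument.

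Without it, $W_x>0$ is unproven, so the probabilities in \eqref{eq:selector}, which divide by $W_x$, are not shown to be well defined as stated. Your fallbacks (defaulting to the global candidate when $W_x=0$) alter the selector rather than proving the theorem about it. Your appeal to $\sup f_x>1$ for the qubit density is also unnecessary, since the contradiction holds for every continuous normalized $f$.

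\textbf{The rest is sound.} Your partition construction works: a disjointified cover by $\delta/4$-balls, with null cells reassigned to nearby positive-mass cells. The reassignment is justified because full support puts the null union inside the closure of its complement.

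It is more roundabout than the paper's construction, though. The paper takes Voronoi cells of a maximal set of centers with pairwise distances strictly greater than $\delta$. Each point is then within $\delta$ of its center, and each cell contains the open $\delta/2$-ball around its center. Positive mass therefore follows directly from full support, with no repair step.

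Your bounds $0\le h_x\le1$, the mixture identity (including the independence of the regional choice from $\Lambda_j$), and the measurability argument all agree with the paper's.
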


\subsection{Finite-mixture selection with a bounded residual}

The elementary identity below adapts mixture and acceptance-complement
sampling~\cite{Devroye1986} to a list generated before the target input is known.
A density is always relative to the specified reference probability $\mu$.

\begin{smlemma}[Finite-envelope selection]
\label{lem:envelope}
Let $X,Z$ be standard Borel spaces and $\mu$ a probability on $Z$. Suppose
$f:X\times Z\to[0,\infty)$ is Borel and $\int f_x\,\dd\mu=1$ for every
$x$. Let $g_1,\ldots,g_m$ be fixed normalized Borel densities, $w_j:X\to[0,\infty)$
Borel weights, and $h:X\times Z\to[0,1]$ Borel. Assume that, for each $x$,
\begin{equation}
 f_x(z)=\sum_{j=1}^m w_j(x)g_j(z)+h_x(z)
 \qquad\text{for }\mu\text{-almost every }z.
 \label{eq:sm-envelope}
\end{equation}
Independently draw $\Lambda_j\sim g_j\mu$ for $j\le m$ and
$\Lambda_N\sim\mu$, where $N=m+1$. There is a jointly Borel randomized
selection of an index $I$ such that $\Lambda_I\sim f_x\mu$ for every $x$.
\end{smlemma}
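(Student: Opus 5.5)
We define the selector explicitly and then verify the law of $\Lambda_I$ by integrating against an arbitrary bounded Borel test function.

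Integrating \eqref{eq:sm-envelope} over $z$ gives $1=W_x+\int h_x\,\dd\mu$, where $W_x=\sum_j w_j(x)$. Since $0\le h_x\le1$, we have $\int(1-h_x)\,\dd\mu=W_x$. The selector is as in \eqref{eq:selector}. Given $x$ and the list $\omega$, Alice first chooses $I=N$ with probability $h_x(\omega_N)$. Otherwise she chooses $I=j$ with probability $w_j(x)/W_x$. If $W_x=0$, then $h_x=1$ $\mu$-a.e., so the second stage occurs with probability zero; there we fix any Borel convention, e.g.\ $I=1$. The map $(x,\omega)\mapsto p(i\mid x,\omega)$ is a composition and product of Borel functions, and $\{W_x>0\}$ is a Borel set, so the kernel is jointly Borel. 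We realize it with two independent uniform local variables, which makes $(x,\omega,U)\mapsto I$ Borel.

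Now fix $x$ and a bounded Borel function $\phi$ on $Z$, and compute $\E\,\phi(\Lambda_I)$ by splitting according to the value of $I$.

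The event $I=N$ contributes
\begin{equation*}
\E\bigl[h_x(\Lambda_N)\phi(\Lambda_N)\bigr]=\int h_x\phi\,\dd\mu .
\end{equation*}

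For $j\le m$, the decision to take $I=j$ depends only on $\Lambda_N$ and on local coins, not on $\Lambda_j$. By independence and Fubini, this event contributes
\begin{equation*}
\frac{w_j(x)}{W_x}\,\E\bigl[1-h_x(\Lambda_N)\bigr]\,\E\,\phi(\Lambda_j)
= w_j(x)\int g_j\phi\,\dd\mu .
\end{equation*}
Here we used $\E[1-h_x(\Lambda_N)]=W_x$. When $W_x=0$, both sides vanish.

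Summing the contributions and applying \eqref{eq:sm-envelope} $\mu$-a.e. gives
\begin{equation*}
\E\,\phi(\Lambda_I)=\int f_x\phi\,\dd\mu .
\end{equation*}
Hence $\Lambda_I\sim f_x\mu$ for every $x$.

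The main point needing care is the independence step: the selection of a regional index must not look at the corresponding regional candidate. This holds by construction, since the selector uses only $x$, $\omega_N$, and local coins. The rest is routine measurability, which we would handle by exhibiting the selector as an explicit Borel function of $(x,\omega,U_1,U_2)$ with independent uniform coins $U_1,U_2$. The degenerate case $W_x=0$ is covered by the almost-sure argument above.
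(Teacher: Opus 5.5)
Your proposal is correct and follows essentially the same argument as the paper. You use the same two-stage selector, the normalization identity $W_x=1-\int h_x\,\dd\mu$, and the key observation that the choice of a regional index depends only on $\Lambda_N$ and local coins, so its probability factorizes against $\Lambda_j$. The differences are cosmetic: you test against bounded Borel $\phi$ rather than sets $D$, and when $W_x=0$ you keep the acceptance stage and send the probability-zero rejection branch to $I=1$, whereas the paper sets $I=N$; both conventions give $\Lambda_I\sim\mu=f_x\mu$.
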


\begin{proof}
Normalization gives
\begin{equation}
 W_x:=\sum_{j=1}^m w_j(x)=1-\int h_x\,\dd\mu.
 \label{eq:sm-W}
\end{equation}
When $W_x>0$, accept the global candidate $\Lambda_N$ with probability
$h_x(\Lambda_N)$. If it is not accepted, choose $j$ with probability
$w_j(x)/W_x$ and return $\Lambda_j$. The global acceptance contributes the
subprobability $h_x\mu$. Its complement has probability $W_x$ and is
independent of all cell candidates. Thus, for every Borel $D\subseteq Z$,
\begin{align}
 \Prob(\Lambda_I\in D\mid x)
 &=\int_D h_x\,\dd\mu+
   W_x\sum_{j=1}^m\frac{w_j(x)}{W_x}\int_D g_j\,\dd\mu \nonumber\\
 &=\int_D f_x\,\dd\mu.
 \label{eq:sm-selectionidentity}
\end{align}
When $W_x=0$, all weights vanish and $h_x=1$ almost everywhere, so
$f_x=1$ almost everywhere. Define $I=N$ for all lists in this case.
The selector is then correct as well. The acceptance test and finite
categorical choice are Borel operations, and the case distinction uses the
Borel function $W_x$.
\end{proof}

The selection of a mixture component does not inspect that component's
candidate. Only the residual acceptance test inspects its candidate. This
independence is the reason a single pre-sampled point from each component
suffices; the construction contains no repeated acceptance trials.

\subsection{Uniform partitions and measurable selection}
\label{sm:selectionproof}

\begin{proof}[Proof of Theorem~\ref{thm:selection}]
Continuity of $f$ on the compact product $X\times Z$ gives a radius
$\delta>0$ such that
\begin{equation}
 d_Z(z,z')\le\delta
 \quad\Longrightarrow\quad
 \abs{f_x(z)-f_x(z')}\le\frac12
 \qquad\text{for every }x\in X.
 \label{eq:sm-uniformcontinuity}
\end{equation}
Choose a maximal set of centers $v_1,\ldots,v_m$ whose pairwise distances
are strictly greater than $\delta$. Compactness bounds its cardinality;
maximality gives a cover by closed radius-$\delta$ balls. Assign each point to its nearest center,
breaking ties by the least index. The resulting Voronoi cells are Borel,
contain their representatives, and satisfy~\eqref{eq:sm-cellvariation}.
Each cell contains the open radius-$\delta/2$ ball around its center, so
full support gives $\mu(C_j)>0$. Under a uniform Lipschitz bound $L>0$, one
may take $\delta=1/(2L)$.

Define $\gamma_j,\underline f_x,h_x$ as in~\eqref{eq:lowerresidual}. For $z\in C_j$ with
$f_x(v_j)\ge1/2$,
\begin{equation}
 h_x(z)=f_x(z)-f_x(v_j)+\frac12\in[0,1].
 \label{eq:sm-residualbound}
\end{equation}
If $f_x(v_j)<1/2$, then $\gamma_j(x)=0$ and
$0\le h_x(z)=f_x(z)\le1$. Hence the lower density is nonnegative and lies
below $f_x$, leaving a residual bounded by one. The choices
\begin{equation}
 g_j(z)=\frac{\ind\{z\in C_j\}}{\mu(C_j)},\qquad
 w_j(x)=\mu(C_j)\gamma_j(x)
 \label{eq:sm-cellmixture}
\end{equation}
satisfy Lemma~\ref{lem:envelope}.

It remains to verify that the denominator in~\eqref{eq:selector} is strictly
positive for every $x$. If $W_x=0$, positivity of all cell masses forces
$\gamma_j(x)=0$ for every $j$, and therefore $f_x(v_j)\le1/2$. The variation
bound gives $f_x\le1$ everywhere. Since $f_x$ is continuous,
$\int f_x\,\dd\mu=1$, and $\mu$ has full support, this implies
$f_x\equiv1$: any point with $f_x<1$ would have a positive-mass neighborhood
on which $f_x$ is bounded away from one. But $f_x\equiv1$ contradicts
$f_x(v_j)\le1/2$. Thus $W_x>0$.

The functions $\gamma_j,w_j,W_x$ are continuous in $x$, the residual is jointly
Borel, and $W_x$ never vanishes. Consequently all probabilities
$p(i\mid x,\omega)$ in~\eqref{eq:selector} are jointly Borel. If
$U\sim\Unif[0,1)$ is independent of the list, the rule
\begin{equation}
 I(x,\omega,U)=\min\left\{i:
             U<\sum_{k=1}^{i}p(k\mid x,\omega)\right\}
 \label{eq:sm-inversecdf}
\end{equation}
is a Borel randomized selection with these probabilities. It implements
the selector in Lemma~\ref{lem:envelope}, so the selected vector has the
target distribution $f_x\mu$. The required density
evaluations are those at the $m$ representatives and the global candidate.
\end{proof}

More explicitly, for a realized list $\omega$,
\begin{align}
 \Prob(\Lambda_I\in D\mid x,\Omega=\omega)
 &=\sum_{i=1}^N p(i\mid x,\omega)\ind\{\omega_i\in D\},\nonumber\\
 \Prob(\Lambda_I\in D\mid x)
 &=\int Q(\dd\omega)\sum_{i=1}^N
          p(i\mid x,\omega)\ind\{\omega_i\in D\}
   =\int_D f_x\,\dd\mu.
 \label{eq:sm-conditionedlaw}
\end{align}
The first line gives the conditional distribution of the selected vector
given a fixed list. Averaging over the random list yields the target
distribution $f_x\mu$, as shown in the second line.

Once the partition and masses are fixed, the selector requires finitely many
density evaluations, arithmetic operations, comparisons, and random draws.
For the Voronoi construction a direct cell search and cumulative-weight
calculation require $O(m)$ distance evaluations and arithmetic operations.
All local coins can be incorporated into the shared seed. These are finite
procedures in an exact-real model, not bounds on the number of bits needed
to represent inputs or samples.

\subsection{Relation to random-selection criteria}

Let $Q$ be the joint distribution of the candidate list. For a Borel set
$D\subseteq Z$, define
\begin{equation}
 H_D=\{\omega:\omega_i\in D\text{ for at least one }i\}.
 \label{eq:sm-hitset}
\end{equation}
Any selection with target distribution $\nu_x$ must satisfy
$\nu_x(D)\le Q(H_D)$. On compact metric spaces, the Strassen--Artstein
criterion makes this condition sufficient for a randomized
selection for each fixed target distribution~\cite{Strassen1965,Artstein1983}. The explicit probabilities above
construct such a selection directly; no separate measurable choice of
existentially specified selectors is required.

The distinction between independent and identically distributed candidates
is important. If $\nu_x(D)=1$ with $0<\mu(D)<1$, then a finite independent
list drawn entirely from $\mu$ misses $D$ with positive probability
$(1-\mu(D))^N$. It cannot supply an exact selection. In our construction,
any cell with $w_j(x)>0$ satisfies $f_x\ge \gamma_j(x)>0$ on that cell, while
the global candidate is accepted according to the residual. The construction
therefore accommodates target densities that vanish on nonempty regions.

\section{Explicit spherical partition and candidate generation}
\label{sm:grid}

\subsection{Paired angular cells}

Fix $0<\theta<\pi/4$, put $L=L_\theta=2\cot^2\theta\ge2$, and set
\begin{equation}
 n=\left\lceil\frac{\pi L}{\sqrt2}\right\rceil,
 \qquad \Delta=\frac{\pi}{2n},\qquad m=4n^2.
 \label{eq:sm-gridparameters}
\end{equation}
In polar coordinates $(\vartheta,\varphi)$ on the northern hemisphere,
define the closed patches
\begin{equation}
 \begin{split}
 D_{k\ell}=\bigl\{&(\sin\vartheta\cos\varphi,\sin\vartheta\sin\varphi,
                    \cos\vartheta):\\
 &k\Delta\le\vartheta\le(k+1)\Delta,\quad
 \ell\Delta\le\varphi\le(\ell+1)\Delta\bigr\},
 \end{split}
 \label{eq:gridpatches}
\end{equation}
where $0\le k<n$ and $0\le\ell<4n$. Pair each patch with its antipodal image.
The sets cover the sphere. Assign a point to the least index
$j=4nk+\ell+1$ among paired patches containing it. The resulting cells are
Borel and antipodally invariant. The boundary convention resolves poles,
angular edges, and the equator, all of which have zero area. Every cell has
positive area and contains its northern midpoint representative $v_{k\ell}$,
whose polar angles are $((k+\tfrac12)\Delta,(\ell+\tfrac12)\Delta)$.

For a point in a northern patch, the straight line in angular coordinates
from the representative to the point has increments of magnitude at most
$\Delta/2$ in each coordinate. The spherical line element is
$\dd\vartheta^2+\sin^2\vartheta\,\dd\varphi^2$. Its path length is bounded
by the Euclidean length in angular coordinates, and chordal distance is no
larger than path length. Consequently
\begin{equation}
 \min\{\norm{\lambda-v_{k\ell}},\norm{\lambda+v_{k\ell}}\}
 \le\frac{\Delta}{\sqrt2}
 =\frac{\pi}{2\sqrt2\,n}
 \le\frac1{2L}
 \qquad (\lambda\in C_{k\ell}).
 \label{eq:sm-gridradius}
\end{equation}
The estimate remains valid for the assigned boundary points. Since $f_x$ is
even,~\eqref{eq:sm-Lipschitz} implies
\begin{equation}
 \abs{f_x(\lambda)-f_x(v_{k\ell})}\le\frac12
 \qquad\text{for every }x\in\sphere,\quad\lambda\in C_{k\ell}.
 \label{eq:sm-gridvariation}
\end{equation}
Thus the paired cells satisfy the condition of Theorem~\ref{thm:selection}.

\subsection{Cell probabilities and exact candidate generation}

Integrating the spherical area element
$\sin\vartheta\,\dd\vartheta\,\dd\varphi$ over a northern patch, doubling
for the antipodal patch, and dividing by $4\pi$ gives
\begin{equation}
 \mu(C_{k\ell})
 =\frac{2\Delta}{4\pi}
     [\cos(k\Delta)-\cos((k+1)\Delta)]
 =\frac{\cos(k\Delta)-\cos((k+1)\Delta)}{4n}.
 \label{eq:sm-gridmass}
\end{equation}
These masses are positive and sum to one.

To generate the candidate for $C_{k\ell}$, independently draw
$t\sim\Unif[\cos((k+1)\Delta),\cos(k\Delta)]$,
$\varphi\sim\Unif[\ell\Delta,(\ell+1)\Delta)$, and a fair sign
$\varepsilon\in\{-1,+1\}$, and set
\begin{equation}
 \Lambda_{4nk+\ell+1}
 =\varepsilon\bigl(\sqrt{1-t^2}\cos\varphi,
                   \sqrt{1-t^2}\sin\varphi,t\bigr).
 \label{eq:gridcandidate}
\end{equation}
The global candidate uses the unsigned formula with independent
$t\sim\Unif[-1,1]$ and $\varphi\sim\Unif[0,2\pi)$. Draws for different
candidates are independent.

With $t=\cos\vartheta$, uniform surface area is proportional to
$\dd t\,\dd\varphi$. Independent uniform sampling of $t$ and $\varphi$
on the intervals specified before~\eqref{eq:gridcandidate} therefore produces
the normalized uniform distribution on the northern patch. An independent
fair sign chooses between that patch and its antipodal image. Boundaries
have zero area, so sampling from each assigned cell produces exactly
$\mu(\cdot\mid C_{k\ell})$. The unsigned formula with uniform
$t\in[-1,1]$ and $\varphi\in[0,2\pi)$ samples the global candidate uniformly
on the sphere, with distribution $\mu$.

This specifies $Q$ entirely through independent uniform real variables and
fair bits. Given $x$, Alice evaluates $f_x$ at the $m$ midpoint representatives
and at the global candidate. She locates the latter's paired cell, computes
$h_x$ and the weights in~\eqref{eq:cellweights}, and performs the finite
selection~\eqref{eq:selector}. The masses in Eq.~\eqref{eq:sm-gridmass},
pointwise evaluations of $f_x$, and finite sums determine all selection
probabilities. In particular, Alice need not evaluate integrals of $f_x$
over the cells. Exact evaluation of the formulas used is still an assumption
of the local model.

Together, Sec.~\ref{sm:quantum} and the grid variation bound verify all
hypotheses of Theorem~\ref{thm:selection} for the prescribed qubit list.
Consequently the vector selected by the main-text protocol has distribution
$\nu_x$ for every $x$, as in~\eqref{eq:samplinggoal}, while the joint
distribution $Q$ of the candidate list is fixed before the inputs arrive.

\section{Finite-box implementation and resource complexity}
\label{sm:index}

\subsection{Index-box realization and output statistics}
\label{sm:boxcorrectness}

For every $i,z$, exactly two pairs $u,v$ satisfy $u\oplus v=z_i$, and
\eqref{eq:indexbox} assigns probability $1/2$ to each. Both marginals are
therefore fair and independent of the other input.

For the implementation~\eqref{eq:prinputs}--\eqref{eq:proutputs}, the PR
relations give
\begin{equation}
 u\oplus v
 =z_N\oplus\bigoplus_{j<N}\ind\{I=j\}(z_j\oplus z_N)=z_I.
 \label{eq:sm-parity}
\end{equation}
The shared bit $H$ is independent of all inputs and preprocessing, so both
marginals are fair. This is exactly $R_N$. When $N=1$ the empty parity is
zero and the formula uses no PR box. No PR input depends on any PR output;
all calls are parallel.

Condition on fixed $x,y$, a realized list $\omega$, and Alice's chosen index
$I$. The bits $d,z$ are then fixed. The final outputs in~\eqref{eq:finaloutputs}
remain individually fair and satisfy
\begin{equation}
 AB=(-1)^{u\oplus v\oplus d}
   =(-1)^{z_I}(-1)^d
   =b_y(\Lambda_I)a_x(\Lambda_I).
 \label{eq:sm-outputproduct}
\end{equation}
Averaging over the list and the index and using Theorem~\ref{thm:selection}
gives
\begin{equation}
 \E(A\mid x,y)=\E(B\mid x,y)=0,\qquad
 \E(AB\mid x,y)=\int a_xb_y\,\dd\nu_x=C_N(x,y).
 \label{eq:sm-nonlocalmoments}
\end{equation}
Equation~\eqref{eq:sm-binarymoments} proves that this branch generates
$P_N$, not merely its correlation function. The shared branch coin and the
decomposition verified in Sec.~\ref{sm:quantum} then recover $P_\theta$.
This establishes exactness of the complete output distribution. We next
verify the resource bound and collect the remaining requirements of
Theorem~\ref{thm:main}.

\subsection{Worst-case and mean resource budgets}

The list has $N=m+1=4n^2+1$ candidates, and the index-box realization uses
$K=N-1=4n^2$ PR boxes. Substitution of~\eqref{eq:sm-gridparameters} gives
\begin{equation}
 K(\theta)=4\left\lceil\sqrt2\pi\cot^2\theta\right\rceil^2.
 \label{eq:sm-exactgridbudget}
\end{equation}
To recover the simpler upper bound, use $L\ge2$ and
\begin{equation}
 n\le\frac{\pi L}{\sqrt2}+1
 \le\left(\frac{\pi}{\sqrt2}+\frac12\right)L
 <2\sqrt2\,L.
 \label{eq:sm-ceilingbound}
\end{equation}
It follows that
\begin{equation}
 K(\theta)\le32L^2=128\cot^4\theta\le128\theta^{-4},
 \label{eq:sm-coarsebound}
\end{equation}
where the last inequality uses $\tan\theta\ge\theta$.
The explicit budget also has the expansion
$K(\theta)=8\pi^2\cot^4\theta+O(\cot^2\theta)$ as $\theta\to0^+$.
These statements concern the displayed construction, not the minimal PR cost.

The PR branch occurs with probability $1-c=2\sin^2\theta$. If calls are
omitted on the local branch,
\begin{equation}
 \E K_{\mathrm{used}}=(1-c)K(\theta)
 \le256\frac{\cos^4\theta}{\sin^2\theta}
 =O(\theta^{-2}).
 \label{eq:sm-meanbudget}
\end{equation}
The branch probability is input-independent. This average estimate is separate
from the finite worst-case bound on every run.

\subsection{Completion of the main theorem}
\label{sm:mainproof}

\begin{proof}[Proof of Theorem~\ref{thm:main}]
Section~\ref{sm:quantum} proves the decomposition~\eqref{eq:decomposition}
and the properties of its target densities. Theorem~\ref{thm:selection},
applied to the explicit partition in Sec.~\ref{sm:grid}, proves exact
selection from the input-independent finite list. The index-box construction
and the moment calculation in Sec.~\ref{sm:boxcorrectness} then give the
Born distribution for every pair $x,y$.

Shared randomness is prepared using distributions that depend only on the
fixed state. The branch coin,
list, bit $H$, and local coins can be included in a single seed independent
of the measurement inputs, with the PR boxes independent of that seed.
The selection kernel is jointly Borel, and all remaining local operations
are Borel sign tests, arithmetic, and finite stochastic processing. Each
branch returns both outputs on every run; no trial is discarded. Each
party uses only its own input and local PR outputs, so the construction
requires no communication. Its calls are parallel and number at most
$K(\theta)$ by~\eqref{eq:sm-exactgridbudget}--\eqref{eq:sm-coarsebound}.
The local branch occurs with probability $c$ and uses no box. Its maximal
weight is the established result of Ref.~\cite{Portmann2012}.
\end{proof}

\subsection{Exact PR-box cost of the full index box}
\label{sm:indexoptimality}

The following statement concerns the complete domain
$\{1,\ldots,N\}\times\{0,1\}^N$, not just input pairs arising from a
particular qubit preprocessing rule. We use the standard wiring model of
independent, singly used PR boxes: each party may choose its inputs and local
query order from its own input, randomness, and previous local outputs,
without communication~\cite{Kaplan2011}.

\begin{smprop}[Index-box cost]
\label{prop:indexcost}
For every integer $N\ge1$, exact realization of $R_N$ on its full input domain
requires at least $N-1$ PR-box calls in the worst case, even with arbitrary
shared randomness and local adaptive wirings. The parallel realization
\eqref{eq:prinputs}--\eqref{eq:proutputs} attains this bound.
\end{smprop}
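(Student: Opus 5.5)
The lower bound should come from a rank argument over $\mathrm{GF}(2)$, and the matching upper bound is already in hand. Achievability is the parity identity~\eqref{eq:sm-parity}: it shows that the $N-1$ parallel calls \eqref{eq:prinputs}--\eqref{eq:proutputs} realize $R_N$ exactly. For the converse, I first remove randomness. A realization of $R_N$ is zero-error: it must give $u\oplus v=z_i$ with probability one for every input pair. So I fix any value of the shared seed and of both parties' local coins that occurs with positive probability, and the parity condition must still hold for every $(i,z)$ and every outcome of the PR boxes. From then on each party's wiring is a deterministic function of its own input and its previous PR outputs, and only the internal PR randomness remains.

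I begin with non-adaptive wirings, where the argument is clean. Each call $j\le k$ has inputs $a_j(i)$ and $b_j(z)$, and the final outputs are
\begin{equation*}
 u=\alpha(i)\oplus\bigoplus_j\epsilon_j u_j,\qquad
 v=\beta(z)\oplus\bigoplus_j\epsilon'_j v_j .
\end{equation*}
Any nonlinear use of the box outputs would make the parity depend on the uniform internal bits, so only XORs of the outputs can appear. Exactness then forces
\begin{equation*}
 z_i\oplus\alpha(i)\oplus\beta(z)=\bigoplus_{j=1}^k a_j(i)b_j(z)\qquad\text{for all }(i,z).
\end{equation*}
I read this as an identity between $N\times2^N$ matrices over $\mathrm{GF}(2)$. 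The right-hand side has rank at most $k$. On the left, add row $1$ to row $i$ for $i=2,\dots,N$. This cancels $\beta$ and leaves the functions $z\mapsto z_1\oplus z_i\oplus\alpha(1)\oplus\alpha(i)$. Their linear parts $z_1\oplus z_i$ are linearly independent, and adding constants cannot create a linear dependence among functions whose linear parts are independent. Hence the left side has rank at least $N-1$, so $k\ge N-1$.

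The main obstacle is extending this to adaptive wirings, where a party's later box inputs may depend on its earlier PR outputs. I would first follow the standard wiring model of Ref.~\cite{Kaplan2011}. Its consistency requirement that the wiring be free of causal loops lets one order the calls so that each box is queried by both parties before its outputs influence any later choice. Then I would induct on the calls in that order. For the first box in the order, Alice's output is a uniform bit independent of everything else, so I condition on its value $u_1=0$. Bob's output is then $v_1=a_1(i)b_1(z)$. I would then try to show that the remaining protocol, together with the relabeled local outputs, is again an exact XOR-realization with $k-1$ boxes of a function differing from the original by the rank-one term $a_1(i)b_1(z)$.

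If this induction goes through, the rank computation above applies unchanged, because each call lowers the rank of the target by at most one. The delicate point is that Bob's subsequent inputs depend on $v_1$ and hence on $i$. I would handle this by splitting on the value of $v_1$, which is a $\{0,1\}$-valued product of a function of $i$ and a function of $z$. Each branch is then a wiring with $k-1$ boxes whose inputs depend on $(i,z)$ only through the party's own data and this shared bilinear bit. The goal would be to show that the branch can be absorbed into an additive rank-one correction. If that absorption fails, the fallback is to invoke directly the characterization in Ref.~\cite{Kaplan2011} that adaptive PR wirings computing a function in XOR form exactly can be converted into non-adaptive ones with the same number of boxes. The rank bound then finishes the proof.
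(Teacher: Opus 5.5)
Your achievability argument and the rank bound for non-adaptive wirings are fine in substance. One caveat: the claim that only XORs of box outputs can appear is not literally true, since for boxes Alice feeds with $0$, nonlinear functions of the outputs can be used on both sides and cancel. The identity $z_i\oplus\alpha(i)\oplus\beta(z)=\bigoplus_j\lambda_ja_j(i)b_j(z)$ can still be recovered. Exactness forces all $U(i,\cdot)$ and $V(z,\cdot)$ to be translates of one Boolean function $g$ up to constants. The shifts $r$ for which $g(w)\oplus g(w\oplus r)$ is constant form a subspace, and that constant is linear in $r$.

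The genuine gap is the adaptive case, which the statement explicitly includes. There your plan cannot succeed, because its invariant is false: $\mathrm{GF}(2)$ rank is not a lower bound for adaptive PR complexity. Take equality on $n$-bit strings, $\mathrm{EQ}=\prod_j(x_j\oplus y_j\oplus1)$. It can be computed in XOR form by multiplying XOR-shared bits one factor at a time. Each product of two shared bits costs two PR boxes whose inputs are the current shares, so $2(n-1)$ adaptive boxes suffice. Yet its communication matrix is the $2^n\times2^n$ identity, whose rank stays at least $2^n-2$ after adding any $\alpha(x)\oplus\beta(y)$.

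So ``each call lowers the rank of the target by at most one'' fails in general. For the same reason, the conversion of adaptive wirings into non-adaptive ones with the same number of boxes, which you attribute to Ref.~\cite{Kaplan2011}, cannot hold. The breakdown is exactly where you flagged it: after conditioning on $u_1=0$, Bob's later inputs depend on $a_1(i)b_1(z)$, hence on Alice's input, and no additive rank-one correction absorbs this. Your ordering claim is also too strong. Local query orders may cross (Alice's input to box $2$ depending on box $1$, Bob's input to box $1$ depending on box $2$), so no global order of the kind you describe need exist, even though nonsignalling keeps the statistics well defined.

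What is missing is a measure that is robust to adaptivity, namely one-way communication, together with conditioning on \emph{Bob's} outputs rather than Alice's. First fix the seeds, which is possible off a null set because the input domain is finite. Then consider the event that all $k$ of Bob's PR outputs are $0$. It has probability $2^{-k}$ for every input pair, since each fresh output Bob sees is fair given his previous local transcript, whatever Alice's inputs or query order. On this event, Bob's whole adaptive sequence of box inputs $b_1,\ldots,b_k$ and his final bit $v$ are functions of $z$ alone.

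Given these $k+1$ bits, Alice can replay her own adaptive queries, since each of her box outputs must then equal $ab$. She thus recovers $u$ and $z_i=u\oplus v$. This is a deterministic one-way protocol for indexed retrieval, so distinct $z$ require distinct messages, giving $2^{k+1}\ge2^N$ and $k\ge N-1$. This is the paper's argument. For the index function it gives the same number as your rank computation, but only the communication argument covers adaptive wirings.
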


\begin{proof}
The upper bound was proved above. For the lower bound, suppose a protocol
uses at most $k$ PR boxes; pad unused calls with ignored dummy calls. Only the
zero-error parity requirement $u\oplus v=z_i$ is needed.

We apply the one-way communication reduction for nonlocal-box
protocols~\cite[Lemma~1]{Kaplan2011}, with the communication directed from
Bob to Alice. Because the input domain is finite, all shared and local
random seeds can be fixed so that the parity condition holds with probability
one over the PR outputs simultaneously for every input pair. In the resulting
protocol consider the transcript in which all of Bob's $k$ PR outputs are
zero. Its probability is $2^{-k}$ for every input pair: each fresh local
PR output is fair conditional on Bob's previous local transcript, regardless
of Alice's inputs or query order. Bob can therefore
compute his entire vector of PR input bits, indexed by the fixed box labels,
and his final output $v$, from $z$ alone. This remains true when his local
query order is adaptive.

Bob sends these $k$ input bits and the final bit $v$ to Alice. Alice simulates
her own query order. For a box with her input $a$ and Bob's transmitted input
$b$, her output on the selected transcript must be $ab$, since Bob's box
output is zero. She thus reconstructs her final $u$ and obtains
$z_i=u\oplus v$. The reduction produces a deterministic one-way protocol
for indexed retrieval, with at most $k+1$ transmitted bits.

Distinct strings $z,z'\in\{0,1\}^N$ must give different messages. Otherwise,
choosing an index $i$ where they differ would force Alice to give the same
answer on two inputs requiring different answers. There must therefore be
at least $2^N$ messages, so $k+1\ge N$. This proves $k\ge N-1$.
\end{proof}

This lower bound is an application of the communication method, not a lower
bound for all quantum simulators. In the qubit protocol the table has the
restricted form $z_j=\ind\{y\cdot\Lambda_j<0\}$, and alternative protocols
need not implement the full index box or use the same candidate list.

\subsection{Uniform resources away from the product endpoint}
\label{sm:uniform}

Fix $0<\theta_0<\pi/4$. For every $\theta\in[\theta_0,\pi/4)$,
$L_\theta\le L_{\theta_0}$. Hence the grid, candidate distributions, and
index box constructed at $\theta_0$ serve all these states. Only the target
density evaluations, the selection weights, and the branch probability
change with the known state. The maximally entangled endpoint may use the
same PR supply by embedding its one-box protocol and ignoring unused boxes.
At the product endpoint no box is needed. The divergence of our bound near
that endpoint does not establish a divergence of the minimum simulation cost.

\section{Mixed states and binary measurements}
\label{sm:extensions}

\subsection{Endpoints, basis changes, and fixed mixed states}

At $\theta=0$, the state is product and its measurement probabilities factor;
local sampling suffices. At $\theta=\pi/4$, the one-PR-box simulation of
Ref.~\cite{Cerf2005} applies. Every pure two-qubit state is related to a
Schmidt state by local unitaries. Conjugating each local measurement by its
known unitary absorbs these basis changes without communication or additional
boxes. Thus all fixed pure two-qubit states are covered.

For a fixed mixed state, choose a finite decomposition
\begin{equation}
 \rho=\sum_{r=1}^{q}p_r\ket{\psi_r}\bra{\psi_r},
 \qquad p_r\ge0,\qquad\sum_rp_r=1.
 \label{eq:sm-mixture}
\end{equation}
A spectral decomposition has $q\le4$. Shared randomness selects $r$ with
probability $p_r$, independently of the measurements. The parties run the
pure-state simulator for $\ket{\psi_r}$. Writing $\theta_r$ for its Schmidt
parameter and $K_r=K(\theta_r)$ for its sufficient PR budget,
a supply of
\begin{equation}
 K(\rho)=\max_{r:p_r>0}K_r
 \label{eq:sm-mixedbudget}
\end{equation}
covers every branch, with unused boxes ignored. Linearity of Born's rule
proves the mixed-state statistics. Finite mixing preserves the jointly Borel
local kernels and parallel PR use. The chosen decomposition determines an
upper bound and is not asserted to optimize the cost.

\subsection{Binary POVMs and local quantum channels}
\label{sm:binarypovm}

Alice's input is a binary qubit POVM $\{E,\id_2-E\}$, with
$0\le E\le\id_2$. She obtains a direction $x$ and eigenvalues $e_\pm$
from the spectral decomposition of its effect $E$:
\begin{equation}
 E=e_+\Pi_{+|x}+e_-\Pi_{-|x},\qquad
 \Pi_{\pm|x}=\frac{\id_2\pm x\cdot\sigma}{2},\qquad
 0\le e_\pm\le1.
 \label{eq:sm-effect}
\end{equation}
Alice supplies this derived direction $x$ to the projective-measurement
simulator. Conditional on its
outcome $\alpha\in\{-1,+1\}$, she reports the first POVM outcome with probability
$e_\alpha$ and the second otherwise. Bob uses the analogous rule for
$\{F,\id_2-F\}$. This is local stochastic postprocessing and adds no PR
calls~\cite{Oszmaniec2017}.

If $q_A(j\mid\alpha)$ and $q_B(k\mid\beta)$ are the local transition probabilities,
with the postprocessing coins conditionally independent given $\alpha,\beta$, then
\begin{align}
 P(j,k)
 &=\sum_{\alpha,\beta}q_A(j\mid\alpha)q_B(k\mid\beta)
       \Tr[\rho(\Pi_{\alpha|x}\otimes\Pi_{\beta|y})]\nonumber\\
 &=\Tr[\rho(E_j\otimes F_k)].
 \label{eq:sm-povmidentity}
\end{align}
For measurability, write $E=t\id_2+v\cdot\sigma$. At $v\ne0$, choose
$x=v/\norm v$ and $e_\pm=t\pm\norm v$; at $v=0$, use a fixed direction
and $e_+=e_-=t$. These are Borel choices. The same holds for Bob. Combining
this construction with the previous subsection proves
Corollary~\ref{cor:mixed}. General multi-outcome POVMs need not admit this
projective-measurement postprocessing representation~\cite{Oszmaniec2017}.

A local quantum channel $\mathcal E$ followed by a binary measurement is
also covered. Its Heisenberg adjoint $\mathcal E^*$ takes the effect $E$ to
$\mathcal E^*(E)$; positivity and unitality give another binary POVM
$\{\mathcal E^*(E),\id_2-\mathcal E^*(E)\}$. The channel may be included in
the local input. This statement concerns classical outcome statistics only.

\section{Output stability under sampling errors}
\label{sm:stability}

For probability measures $\sigma,\tau$ on the same measurable space, define
\begin{equation}
 \norm{\sigma-\tau}_{\tv}
 :=\sup_D\abs{\sigma(D)-\tau(D)},
 \label{eq:sm-TVdefinition}
\end{equation}
where $D$ ranges over measurable sets. Suppose a perturbed implementation
produces a selected vector with probability distribution $\widetilde\nu_x$
satisfying
\begin{equation}
 \sup_x\norm{\widetilde\nu_x-\nu_x}_{\tv}\le\epsilon.
 \label{eq:sm-samplingerror}
\end{equation}
Assume the index box and sign responses are evaluated exactly and remain
otherwise unchanged. Since $\abs{a_xb_y}=1$, the correlation changes by at
most $2\epsilon$. The fair index-box outputs preserve zero means regardless
of the distribution of the selected vector. For two unbiased binary behaviors, total-variation
distance is half the absolute difference of their correlations, so
\begin{equation}
 \sup_{x,y}\norm{\widetilde P_N(\cdot\mid x,y)-P_N(\cdot\mid x,y)}_{\tv}
 =\frac12\sup_{x,y}\abs{\widetilde C_N(x,y)-C_N(x,y)}
 \le\epsilon.
 \label{eq:sm-nonlocalstability}
\end{equation}
If the local branch and branch coin are exact, then
\begin{equation}
 \sup_{x,y}\norm{\widetilde P_\theta(\cdot\mid x,y)
                    -P_\theta(\cdot\mid x,y)}_{\tv}
 \le(1-c)\epsilon.
 \label{eq:sm-fullstability}
\end{equation}
Local stochastic postprocessing does not increase this distance, and finite
mixtures preserve the corresponding weighted error bounds. For a Bell-game
payoff valued in $[0,1]$ and a fixed input distribution, the change in expected
payoff is bounded by the same uniform output-distribution error.

These bounds are conditional on~\eqref{eq:sm-samplingerror}. They do not
establish an end-to-end finite-precision implementation. In particular, a
finite or countable quantization of an absolutely continuous distribution
has total-variation distance one from the latter: the quantized support has
probability one under one measure and zero under the other. Numerical
quantization and approximate sign evaluation therefore require their own
output-level error analysis. The exact theorem uses the prescribed
continuous variables and exact local evaluations.

\end{document}